\documentclass[fleqn,10pt]{SelfArx}

\usepackage[english]{babel}
\usepackage{amsmath,amssymb,mathtools}
\usepackage{amsthm}
\usepackage{array}
\usepackage{booktabs}
\usepackage{listings}
\usepackage{xcolor}
\usepackage{tikz}
\usetikzlibrary{arrows.meta,decorations.pathreplacing}
\usepackage{seqsplit}
\usepackage{orcidlink}
\usepackage{csquotes}
\usepackage{enumitem}
\usepackage[style=numeric,sorting=none,backend=biber]{biblatex}
\usepackage{hyperref}

\theoremstyle{plain}
\newtheorem{theorem}{Theorem}
\newtheorem{lemma}[theorem]{Lemma}
\newtheorem{proposition}[theorem]{Proposition}
\theoremstyle{definition}
\newtheorem{definition}[theorem]{Definition}

\definecolor{color1}{RGB}{0,0,90}
\definecolor{color2}{RGB}{0,20,20}
\definecolor{MissingPairCell}{RGB}{214,70,70}
\definecolor{PresentPairCell}{RGB}{238,242,247}
\definecolor{MatrixRule}{RGB}{35,35,35}
\definecolor{FlowArrow}{RGB}{75,92,112}
\definecolor{FiberExtra}{RGB}{74,119,168}
\definecolor{FiberText}{RGB}{44,55,68}

\newcommand{\Lean}{Lean}
\newcommand{\mathlib}{\textsf{\small mathlib}}
\newcommand{\Fin}{\operatorname{Fin}}
\newcommand{\N}{\mathbb{N}}
\newcommand{\distH}{d_h}

\newcommand{\ArtifactRepoUrl}{https://github.com/florath/covering-codes-lean}
\newcommand{\ArtifactCommit}{fae817df3d5b8562c8fa27bc009050a797ddb5b6}

\newcommand{\ArtifactBlob}{\ArtifactRepoUrl/blob/\ArtifactCommit}
\newcommand{\artifactrepo}{\href{\ArtifactRepoUrl}{\path{github.com/florath/covering-codes-lean}}}
\newcommand{\artifactcommit}{\href{\ArtifactRepoUrl/tree/\ArtifactCommit}{\texttt{\ArtifactCommit}}}
\newcommand{\leanfile}[1]{\href{\ArtifactBlob/#1}{\path{#1}}}
\newcommand{\leanfileat}[2]{\href{\ArtifactBlob/#1\#L#2}{\path{#1}}}
\newcommand{\leandir}[1]{\href{\ArtifactRepoUrl/tree/\ArtifactCommit/#1}{\path{#1}}}
\newcommand{\leansource}[3]{\href{\ArtifactBlob/#2\#L#3}{#1}}

\newcommand{\leanfilelabel}[2]{\href{\ArtifactBlob/#2}{\texttt{#1}}}
\newcommand{\leandirlabel}[2]{\href{\ArtifactRepoUrl/tree/\ArtifactCommit/#2}{\texttt{#1}}}
\newcommand{\leanmain}[2]{\leansource{#1}{CoveringCodes/Database/Sources/OctonaryFourTwo.lean}{#2}}
\newcommand{\leansparse}[2]{\leansource{#1}{CoveringCodes/Database/Sources/SparseSlicer.lean}{#2}}
\newcommand{\leanlrat}[2]{\leansource{#1}{CoveringCodes/Database/Sources/OctonaryFourTwoBlockLRAT.lean}{#2}}
\newcommand{\leancovernum}[2]{\leansource{#1}{CoveringCodes/CoveringNumber.lean}{#2}}
\newcommand{\leanexplicit}[2]{\leansource{#1}{CoveringCodes/Database/ExplicitCode.lean}{#2}}
\makeatletter
\newcommand{\leandecl}[3]{%
  \href{\ArtifactBlob/#2\#L#3}{%
    \begingroup\ttfamily
    \edef\@tempa{\detokenize{#1}}%
    \expandafter\seqsplit\expandafter{\@tempa}%
    \endgroup}}
\makeatother
\newcommand{\leanmaindecl}[2]{\leandecl{#1}{CoveringCodes/Database/Sources/OctonaryFourTwo.lean}{#2}}
\newcommand{\leansparsedecl}[2]{\leandecl{#1}{CoveringCodes/Database/Sources/SparseSlicer.lean}{#2}}
\newcommand{\leanlratdecl}[2]{\leandecl{#1}{CoveringCodes/Database/Sources/OctonaryFourTwoBlockLRAT.lean}{#2}}
\newcommand{\LeanRefs}[1]{%
  \par\smallskip\noindent
  {\footnotesize\raggedright\textit{Lean implementation.} #1\par}%
  \smallskip}

\newcommand{\FiberCountDiagram}{%
\begin{tikzpicture}[
  x=0.68cm,
  y=0.50cm,
  basecell/.style={draw=white,line width=0.3pt,fill=PresentPairCell},
  extraarea/.style={
    draw=FiberExtra,
    dashed,
    line width=0.55pt,
    fill=FiberExtra!9
  },
  label/.style={font=\scriptsize,text=FiberText},
  formula/.style={font=\small,text=FiberText,align=center},
  casebox/.style={
    rectangle,
    rounded corners=2pt,
    draw=MatrixRule,
    fill=PresentPairCell,
    line width=0.45pt,
    align=center,
    inner xsep=5pt,
    inner ysep=3pt,
    font=\small,
    text width=2.35cm,
    minimum height=0.95cm
  }
]
  \node[formula] at (4.1,6.35)
    {\(\displaystyle C^\ast=\bigsqcup_{a\in\Fin 8} C^\ast_{i,a}\)\\[2pt]
     \(\displaystyle |C^\ast_{i,a}|=2+\varepsilon_i(a)\),
     \(\varepsilon_i(a)\in\N\)};

  \foreach \a in {0,...,7} {
    \pgfmathsetmacro{\x}{1.05*\a}
    \pgfmathsetmacro{\h}{0.35 + 0.16*mod(3*\a+1,5)}
    \filldraw[basecell] (\x,0) rectangle ++(0.82,0.72);
    \filldraw[basecell] (\x,0.78) rectangle ++(0.82,0.72);
    \filldraw[extraarea] (\x,2.05) rectangle ++(0.82,\h);
    \node[label] at (\x+0.41,3.12) {\(\varepsilon_i(\a)\)};
    \node[label] at (\x+0.41,-0.36) {\(a=\a\)};
  }

  \draw[MatrixRule,line width=0.55pt] (-0.08,1.74) -- (8.22,1.74);
  \node[label,anchor=east] at (-0.25,2.51) {variable extra};
  \node[label,anchor=east] at (-0.25,1.74) {baseline};
  \node[label,anchor=east] at (-0.25,0.75) {two forced};

  \node[label] at (4.1,-0.9)
    {column \(a\) is the fiber \(C^\ast_{i,a}\)};

  \draw[decorate,decoration={brace,amplitude=4pt,mirror},MatrixRule,line width=0.45pt]
    (0,-1.22) -- (8.17,-1.22)
    node[midway,yshift=-0.43cm,formula]
      {\(8\cdot 2=16\) forced baseline};

  \draw[decorate,decoration={brace,amplitude=4pt},FiberExtra,line width=0.5pt]
    (0,3.48) -- (8.17,3.48)
    node[midway,yshift=0.36cm,formula]
      {\(\sum_{a\in\Fin 8}\varepsilon_i(a)=|C^\ast|-16\)};

  \node[casebox] at (1.25,-3.45)
    {\(|C^\ast|\le 21\)\\[1pt]
     total extra \(\le 5\)};
  \node[casebox] at (6.95,-3.45)
    {\(|C^\ast|=22\)\\[1pt]
     total extra \(=6\)};
\end{tikzpicture}%
}

\newcommand{\MissingPairBlockMatrix}{%
\begin{tikzpicture}[x=0.34cm,y=0.34cm]
  \node[anchor=south,font=\scriptsize] at (4,8.95) {symbol in coordinate \(j\)};
  \node[rotate=90,anchor=south,font=\scriptsize] at (-1.8,4) {symbol in coordinate \(i\)};

  \foreach \x in {0,...,7} {
    \node[font=\scriptsize] at (\x+0.5,8.45) {\x};
    \node[font=\scriptsize] at (-0.35,7.5-\x) {\x};
  }

  \foreach \r in {0,...,7} {
    \foreach \c in {0,...,7} {
      \pgfmathtruncatemacro{\rowblock}{ifthenelse(\r<3,1,ifthenelse(\r<6,2,3))}
      \pgfmathtruncatemacro{\colblock}{ifthenelse(\c<3,1,ifthenelse(\c<6,2,3))}
      \ifnum\rowblock=\colblock
        \fill[PresentPairCell] (\c,7-\r) rectangle ++(1,1);
      \else
        \fill[MissingPairCell!78] (\c,7-\r) rectangle ++(1,1);
      \fi
      \draw[white,line width=0.2pt] (\c,7-\r) rectangle ++(1,1);
    }
  }

  \draw[MatrixRule,line width=0.6pt] (0,0) rectangle (8,8);
  \foreach \b in {3,6} {
    \draw[MatrixRule,line width=0.75pt] (\b,0) -- (\b,8);
  }
  \foreach \b in {2,5} {
    \draw[MatrixRule,line width=0.75pt] (0,\b) -- (8,\b);
  }

  \node[font=\scriptsize,anchor=north] at (1.5,-0.18) {\(A_j^{(1)}\)};
  \node[font=\scriptsize,anchor=north] at (4.5,-0.18) {\(A_j^{(2)}\)};
  \node[font=\scriptsize,anchor=north] at (7,-0.18) {\(A_j^{(3)}\)};
  \node[font=\scriptsize,anchor=east] at (-0.65,6.5) {\(A_i^{(1)}\)};
  \node[font=\scriptsize,anchor=east] at (-0.65,3.5) {\(A_i^{(2)}\)};
  \node[font=\scriptsize,anchor=east] at (-0.65,1) {\(A_i^{(3)}\)};

\end{tikzpicture}%
}

\newcommand{\CertificateAForcedEdgeDiagram}{%
\begin{tikzpicture}[
  coordbox/.style={
    rectangle,
    draw=MatrixRule,
    fill=PresentPairCell,
    line width=0.45pt,
    rounded corners=1pt,
    align=center,
    inner xsep=4pt,
    inner ysep=4pt,
    font=\scriptsize,
    text width=1.65cm
  },
  fixedcoord/.style={
    coordbox,
    fill=MissingPairCell!9,
    draw=MissingPairCell!85
  },
  fixededge/.style={MissingPairCell,line width=1.05pt},
  pendingedge/.style={MatrixRule!35,line width=0.55pt,dashed},
  fixedlabel/.style={
    fill=white,
    inner sep=1pt,
    font=\scriptsize,
    text=MissingPairCell
  },
  pendinglabel/.style={
    fill=white,
    inner sep=1pt,
    font=\scriptsize,
    text=MatrixRule!60
  },
  formula/.style={
    rectangle,
    draw=MissingPairCell!60,
    fill=white,
    line width=0.35pt,
    rounded corners=1pt,
    align=center,
    inner xsep=4pt,
    inner ysep=3pt,
    font=\scriptsize
  }
]
  \node[fixedcoord] (c0) at (0,3.0)
    {coordinate \(0\)\\[2pt]
     \(B_0^{(1)}\sqcup B_0^{(2)}\sqcup\{6,7\}\)};
  \node[fixedcoord] (c1) at (5.4,3.0)
    {coordinate \(1\)\\[2pt]
     \(B_1^{(1)}\sqcup B_1^{(2)}\sqcup\{6,7\}\)};
  \node[coordbox] (c2) at (5.4,0)
    {coordinate \(2\)\\[2pt]
     lower profile};
  \node[coordbox] (c3) at (0,0)
    {coordinate \(3\)\\[2pt]
     lower profile};

  \draw[pendingedge] (c1) -- node[pendinglabel,right] {\(E_{12}\)} (c2);
  \draw[pendingedge] (c2) -- node[pendinglabel,below] {\(E_{23}\)} (c3);
  \draw[pendingedge] (c3) -- node[pendinglabel,left] {\(E_{03}\)} (c0);
  \draw[pendingedge] (c0) -- node[pendinglabel,sloped,above,pos=0.30] {\(E_{02}\)} (c2);
  \draw[pendingedge] (c1) -- node[pendinglabel,sloped,above,pos=0.70] {\(E_{13}\)} (c3);
  \draw[fixededge] (c0) -- node[fixedlabel,above] {\(E_{01}\)} (c1);

  \node[formula] at (2.7,1.5)
    {\(\displaystyle E_{01}=\bigcup_{r\ne s}
      B_0^{(r)}\times B_1^{(s)}\)\\[-1pt]
     only \(E_{01}\) is fixed here};
\end{tikzpicture}%
}

\newcommand{\CommonBlockSystemDiagram}{%
\begin{tikzpicture}[
  coordbox/.style={
    rectangle,
    draw=MatrixRule,
    fill=PresentPairCell,
    line width=0.45pt,
    rounded corners=1pt,
    align=center,
    inner xsep=4pt,
    inner ysep=4pt,
    font=\scriptsize,
    text width=1.65cm
  },
  commonedge/.style={MissingPairCell,line width=0.95pt},
  edgelabel/.style={
    fill=white,
    inner sep=1pt,
    font=\scriptsize,
    text=MissingPairCell
  },
  formula/.style={
    rectangle,
    draw=MatrixRule!45,
    fill=white,
    line width=0.35pt,
    rounded corners=1pt,
    align=center,
    inner xsep=4pt,
    inner ysep=3pt,
    font=\scriptsize
  }
]
  \node[coordbox] (c0) at (0,3.0)
    {coordinate \(0\)\\[2pt]
     \(A_0^{(1)}\sqcup A_0^{(2)}\sqcup A_0^{(3)}\)};
  \node[coordbox] (c1) at (5.4,3.0)
    {coordinate \(1\)\\[2pt]
     \(A_1^{(1)}\sqcup A_1^{(2)}\sqcup A_1^{(3)}\)};
  \node[coordbox] (c2) at (5.4,0)
    {coordinate \(2\)\\[2pt]
     \(A_2^{(1)}\sqcup A_2^{(2)}\sqcup A_2^{(3)}\)};
  \node[coordbox] (c3) at (0,0)
    {coordinate \(3\)\\[2pt]
     \(A_3^{(1)}\sqcup A_3^{(2)}\sqcup A_3^{(3)}\)};

  \draw[commonedge] (c0) -- node[edgelabel,above] {\(E_{01}\)} (c1);
  \draw[commonedge] (c1) -- node[edgelabel,right] {\(E_{12}\)} (c2);
  \draw[commonedge] (c2) -- node[edgelabel,below] {\(E_{23}\)} (c3);
  \draw[commonedge] (c3) -- node[edgelabel,left] {\(E_{03}\)} (c0);
  \draw[commonedge] (c0) -- node[edgelabel,sloped,above,pos=0.30] {\(E_{02}\)} (c2);
  \draw[commonedge] (c1) -- node[edgelabel,sloped,above,pos=0.70] {\(E_{13}\)} (c3);

  \node[formula] at (2.7,1.5)
    {\(\displaystyle E_{ij}=\bigcup_{r\ne s}
      A_i^{(r)}\times A_j^{(s)}\)\\[-1pt]
     for every highlighted edge \(ij\)};
\end{tikzpicture}%
}

\newcommand{\MissingCliqueDiagram}{%
\begin{tikzpicture}[
  symbol/.style={
    circle,
    draw=MatrixRule,
    fill=PresentPairCell,
    line width=0.45pt,
    minimum size=0.72cm,
    inner sep=0pt,
    font=\normalsize
  },
  coord/.style={font=\scriptsize},
  missing/.style={MissingPairCell,line width=1.05pt},
  edgelabel/.style={
    fill=white,
    inner sep=1pt,
    font=\scriptsize,
    text=MissingPairCell
  }
]
  \node[symbol] (a) at (0,2.8) {\(a\)};
  \node[symbol] (b) at (4.9,2.8) {\(b\)};
  \node[symbol] (c) at (4.9,0) {\(c\)};
  \node[symbol] (d) at (0,0) {\(d\)};

  \node[coord,anchor=south] at (a.north) {coordinate \(0\)};
  \node[coord,anchor=south] at (b.north) {coordinate \(1\)};
  \node[coord,anchor=north] at (c.south) {coordinate \(2\)};
  \node[coord,anchor=north] at (d.south) {coordinate \(3\)};

  \draw[missing] (a) -- node[edgelabel,above] {\(E_{01}\)} (b);
  \draw[missing] (b) -- node[edgelabel,right] {\(E_{12}\)} (c);
  \draw[missing] (c) -- node[edgelabel,below] {\(E_{23}\)} (d);
  \draw[missing] (d) -- node[edgelabel,left] {\(E_{03}\)} (a);
  \draw[missing] (a) -- node[edgelabel,sloped,above,pos=0.30] {\(E_{02}\)} (c);
  \draw[missing] (b) -- node[edgelabel,sloped,above,pos=0.70] {\(E_{13}\)} (d);
\end{tikzpicture}%
}

\newcommand{\LowerBoundFlowDiagram}{%
\begin{tikzpicture}[
  box/.style={
    rectangle,
    rounded corners=2pt,
    draw=MatrixRule,
    fill=PresentPairCell,
    line width=0.45pt,
    align=center,
    inner xsep=5pt,
    inner ysep=4pt,
    text width=4.25cm,
    font=\small
  },
  branch/.style={
    rectangle,
    rounded corners=2pt,
    draw=color1,
    fill=PresentPairCell,
    line width=0.6pt,
    align=center,
    inner xsep=5pt,
    inner ysep=4pt,
    text width=4.45cm,
    font=\small
  },
  conclusion/.style={
    rectangle,
    rounded corners=2pt,
    draw=FlowArrow,
    fill=PresentPairCell,
    line width=0.7pt,
    align=center,
    inner xsep=5pt,
    inner ysep=4pt,
    text width=4.75cm,
    font=\small
  },
  arrow/.style={-{Latex[length=2.4mm]},FlowArrow,line width=0.7pt},
  note/.style={font=\scriptsize,text=MatrixRule},
  secmark/.style={font=\scriptsize,text=FlowArrow,fill=white,inner sep=0.6pt}
]
  \node[box] (assume) at (0,9.2)
    {Assume a radius-two cover\\ \(C^\ast\subseteq(\Fin 8)^4\)\\ with \(|C^\ast|\le 22\)};

  \node[box] (fiber) at (0,7.55)
    {Fiber lower bounds\\ every fiber has\\ at least two codewords};

  \node[box] (missing) at (0,5.9)
    {Missing-pair graphs \(E_{ij}\)\\ row and column counts force\\ many pairs into \(E_{ij}\)};

  \node[branch] (small) at (-4.35,3.75)
    {\(|C^\ast|\le 21\) case\\ missing-neighbor intersections\\ force a missing clique};

  \node[branch] (exact) at (4.35,3.75)
    {\(|C^\ast|=22\) case\\ LRAT-certified profile classifier\\ forces a \(3+3+2\) block system};

  \node[box] (noclique) at (-4.35,1.65)
    {No missing-pair clique\\ a radius-two coverword must\\ realize one of six pairs};

  \node[box] (common) at (4.35,1.65)
    {Common-block contradiction\\ two \(3\)-symbol components\\ already require \(18\) words};

  \node[conclusion] (lower) at (0,-0.25)
    {\(K_8(4,2)\ge 23\)};

  \node[secmark,anchor=east] at ([xshift=-2pt]assume.west) {\S\ref{sec:lower-bound}};
  \node[secmark,anchor=east] at ([xshift=-2pt]fiber.west) {\S\ref{subsec:fibers-baseline}};
  \node[secmark,anchor=east] at ([xshift=-2pt]missing.west) {\S\ref{subsec:missing-pair-graphs}};
  \node[secmark,anchor=east] at ([xshift=-2pt]small.west) {\S\ref{subsec:covers-at-most-21}};
  \node[secmark,anchor=west] at ([xshift=2pt]exact.east) {\S\ref{subsec:profile-classifier}};
  \node[secmark,anchor=east] at ([xshift=-2pt]noclique.west) {\S\ref{subsec:missing-pair-graphs}};
  \node[secmark,anchor=west] at ([xshift=2pt]common.east) {\S\ref{subsec:common-block-contradiction}};
  \node[secmark,anchor=east] at ([xshift=-2pt]lower.west) {\S\ref{sec:lower-bound}};

  \draw[arrow] (assume) -- (fiber);
  \draw[arrow] (fiber) -- (missing);
  \draw[arrow] (missing.south) -- ++(0,-0.45) -| (small.north);
  \draw[arrow] (missing.south) -- ++(0,-0.45) -| (exact.north);
  \draw[arrow] (small) -- (noclique);
  \draw[arrow] (exact) -- (common);
  \coordinate (lowerLeftIn) at ([xshift=-1.05cm]lower.north);
  \coordinate (lowerRightIn) at ([xshift=1.05cm]lower.north);
  \draw[arrow] (noclique.south) -- ++(0,-0.48) -| (lowerLeftIn);
  \draw[arrow] (common.south) -- ++(0,-0.48) -| (lowerRightIn);

  \node[note] at (-2.85,4.95) {pure counting};
  \node[note] at (2.85,4.95) {finite certificate replay};
\end{tikzpicture}%
}

\hypersetup{
  hidelinks,
  colorlinks,
  breaklinks=true,
  urlcolor=color2,
  citecolor=color1,
  linkcolor=color1,
  bookmarksopen=false,
  pdftitle={A Lean-Certified Proof of K8(4,2)=23},
  pdfauthor={Andreas Florath}
}

\JournalInfo{Preprint, June 2026}
\Archive{}
\PaperTitle{A \Lean-Certified Proof of \texorpdfstring{$K_8(4,2)=23$}{K8(4,2)=23}}
\Authors{Andreas Florath\orcidlink{0009-0001-6471-7372}\textsuperscript{1}}
\affiliation{\textsuperscript{1}Deutsche Telekom AG, \textit{Andreas.Florath@telekom.de}}
\Keywords{Covering Codes --- Hamming Spaces --- Formalized Mathematics --- Lean 4 --- SAT Certificates --- LRAT}
\newcommand{\keywordname}{Keywords}

\Abstract{We prove the exact octonary covering-code value \(K_8(4,2)=23\)
in \Lean{} 4.  The upper bound is given by an explicit 23-word radius-two
code in \((\Fin 8)^4\), checked over all \(8^4\) ambient words.  The lower
bound excludes covers with at most 22 words.
A fiber-counting and missing-pair argument first rules out covers with at most
21 words.
In the remaining 22-word case, the proof
reduces a hypothetical cover to six missing-pair graphs coming from the
coordinate-pair projections.  Fiber-counting arguments constrain these graphs,
and two \Lean-checked Linear RAT (LRAT) refutations of
stored conjunctive-normal-form (CNF) instances force a common \(3+3+2\) block
structure.
This structure is incompatible with a 22-word cover: the two three-symbol
components already force 18 codewords, while the remaining two-symbol
component would require a binary strength-two array of length four with at
most four rows, which is impossible.  The result is packaged as a
proof-carrying \Lean{} artifact: the explicit upper bound, structural lower
bound, CNF instances, and LRAT refutations are checked inside \Lean{}, with
no external SAT solver used during proof replay.}

\begin{document}

\maketitle
\tableofcontents

\section{Introduction}

For \(N\ge1\), write \(\Fin N=\{0,1,\ldots,N-1\}\).  We use this set notation
for readability; in \Lean{}, \(\Fin N\) is the finite type of natural numbers
less than \(N\).  For a finite alphabet of size \(q\), let
\(\mathcal H_q(n)=(\Fin q)^{\Fin n}\) be the \(q\)-ary Hamming space of length
\(n\).  Thus a word is a function \(x:\Fin n\to\Fin q\), and we write \(x_i\)
for its value at coordinate \(i\).  When no ambiguity is possible, we also
write \((\Fin q)^n\) for \((\Fin q)^{\Fin n}\).  The Hamming distance
\(\distH(x,y)\) counts the coordinates on which two words differ.  A finite set
\(C\subseteq\mathcal H_q(n)\) is a radius-\(r\) covering code if every word of
\(\mathcal H_q(n)\) is within distance \(r\) of some \(c\in C\).  The covering
number \(K_q(n,r)\) is the minimum size of such a code
\cite[Sec.~2.1]{cohen1997covering}.

This paper proves one exact value:
\[
  K_8(4,2)=23.
\]
The result sits at the scale where a direct explicit upper bound is still easy
to check, but the lower bound is no longer a routine volume argument.  The main
mathematical content is a lower-bound argument excluding all 22-word covers.
Its finite classification steps are small enough to state as combinatorial
obstructions, but large enough that the proof-carrying \Lean{} formalization is
what makes the argument executable and independently checkable.

The proof is part of the \Lean{} 4 formalization of \(q\)-ary covering
codes described in~\cite{florath2026formal_foundations_covering_codes}.  That
development represents words as functions \(\Fin n\to\Fin q\), uses
finset-valued covering predicates, and states covering-number claims as
upper-bound, lower-bound, and exactness certificates rather than as a primitive
noncomputable minimum function.  Thus the exact value above is obtained from
two independent proof objects:
\[
  K_8(4,2)\le 23
  \qquad\text{and}\qquad
  K_8(4,2)\ge 23.
\]
The first proof object is an explicit code.  The second proof object is a
formal contradiction from an arbitrary assumed cover with at most 22 words.

\paragraph{Artifact availability.}
The formal artifact is identified by the source repository
{\footnotesize\begin{center}
  \artifactrepo{}
\end{center}}
at revision
{\footnotesize\begin{center}
  \artifactcommit{}.
\end{center}}
All source paths in this manuscript are relative to that repository root.  The
main declarations and certificate data are in:
\begin{center}
\footnotesize
\leanfileat{CoveringCodes/Database/Sources/OctonaryFourTwo.lean}{1}\\
\leanfileat{CoveringCodes/Database/Sources/OctonaryFourTwoBlockLRAT.lean}{1}\\
\leandir{data/K_8_4_2/lrat/}
\end{center}
The printed paths above are also hyperlinks to the fixed artifact revision.

\section{Prior Art and Scope}

The notation and basic covering-code background used here follow the standard
reference of Cohen, Honkala, Litsyn, and Lobstein~\cite{cohen1997covering}.
For the concrete parameter studied in this paper, Kéri's online tables of
bounds on covering codes list the octonary case \((n,r)=(4,2)\) with the
interval \([22,23]\)~\cite{keriCoveringCodes}.  The lower side is the Rodemich
rook-domain bound, and the upper side is covered by the large-alphabet bounds
of Kéri and Östergård~\cite{lit_rodemich_1970,lit_keri_ostergard_2005}.

This paper closes that single interval by proving the lower bound
\(K_8(4,2)\ge23\).  It does not attempt to survey or reproduce covering-code
tables more broadly.  The auxiliary \(7\)-ary radius-one fact used in
Lemma~\ref{lem:k731-lower} is a weaker formalized consequence of a classical
result: Kalbfleisch and Stanton proved the exact value
\(K_7(3,1)=25\)~\cite{lit_kalbfleisch_stanton_1969}, while this paper only
needs and formalizes the bound \(K_7(3,1)\ge22\).  The main parameter-specific
contribution is the structural argument and the finite 22-word profile
classifier for \(K_8(4,2)\).

The formal development is written in the \Lean{} 4 theorem prover and uses
\mathlib{} as its mathematical library~\cite{demoura2021lean4,mathlib2020}.
The formal-foundations paper \emph{Formal Foundations and Proof-Carrying
Certificates for \(q\)-ary Covering Codes in Lean 4} supplies the general \Lean{}
infrastructure for \(q\)-ary Hamming spaces, covering-code predicates, and
upper/lower/exactness certificates~\cite{florath2026formal_foundations_covering_codes}.
The present paper contributes the octonary proof object on top of that
infrastructure.  Its finite classifier uses Linear RAT (LRAT) certificates,
following the proof-checking format introduced by Cruz-Filipe, Heule, Hunt,
Kaufmann, and Schneider-Kamp for efficient certified RAT
verification~\cite{cruzfilipe2017lrat}.  An LRAT refutation is a clausal proof
that derives the empty clause from a CNF formula; its proof steps carry enough
checking information for \Lean{} to replay the contradiction mechanically.

\section{Contributions}

The contributions of this paper are the following.

\begin{itemize}[leftmargin=*]
\item It proves the exact covering-code value \(K_8(4,2)=23\), closing the
  previously recorded interval \([22,23]\).  The lower-bound proof excludes
  all 22-word covers by reducing hypothetical small covers to fiber counts and
  missing coordinate-pair graphs, separating the case \(|C|\le21\),
  classifying the remaining 22-word profiles, and deriving a common-block
  contradiction.
\item It supplies a proof-carrying \Lean{} artifact for the result.  The
  explicit upper bound, structural lower-bound reductions, finite CNF
  classifiers, and LRAT refutations are all connected to the same
  upper/lower/exactness certificate interface introduced in the
  formal-foundations paper.
\item It records a reusable pattern for attaching solver-produced finite
  evidence to covering-code lower bounds: the solver output is untrusted, while
  \Lean{} checks the certificate data and the mathematical bridge from the
  finite classifier back to the covering-code statement.
\end{itemize}

\section{The Upper Bound}

The upper bound is represented by the following 23 codewords, written as
four-digit words over \(\{0,\ldots,7\}\):
\[
\begin{array}{rrrrrrrr}
0000 & 4465 & 4576 & 7607 & 7710 & 7021 & 1132 & 1233 \\
6564 & 6375 & 6456 & 5017 & 5620 & 5701 & 2232 & 2143 \\
1242 & 4354 & 3555 & 3474 & 3366 & 0727 & 0611
\end{array}
\]
This code was found by local search and then translated by coordinate and alphabet
normalization so that \(0000\) is one of the centers.

\begin{proposition}[Explicit upper certificate]\label{prop:k842-upper}
The displayed code proves \(K_8(4,2)\le23\).
\end{proposition}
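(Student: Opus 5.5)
The plan is to verify the definition of a radius-two covering code directly: for every word \(x\in(\Fin 8)^4\) we exhibit a codeword \(c\) in the displayed 23-word set \(C\) with \(\distH(x,c)\le 2\). Since the ambient space has only \(8^4=4096\) words and \(|C|=23\), this is a finite check of at most \(4096\cdot 23\) distance computations. In \Lean{} I would state the code as an explicit \texttt{Finset} of functions \(\Fin 4\to\Fin 8\) and prove the covering predicate by \texttt{decide}, or by a kernel-reducible Boolean checker together with a soundness lemma. The resulting covering certificate then feeds into the upper-bound interface of~\cite{florath2026formal_foundations_covering_codes}, which yields \(K_8(4,2)\le|C|=23\).

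To make the check cheap and also readable on paper, I would reorganise it around coordinate pairs. A word \(x\) is within distance two of \(c\) exactly when \(x\) and \(c\) agree on at least two coordinates. So it suffices to show that for every \(x\) there is a pair \(\{i,j\}\subseteq\Fin 4\) and a codeword \(c\) with \((c_i,c_j)=(x_i,x_j)\). Equivalently, for each of the \(\binom42=6\) coordinate pairs let \(P_{ij}\subseteq(\Fin 8)^2\) be the set of pairs realised by \(C\). The statement becomes: no \(x\) has \((x_i,x_j)\notin P_{ij}\) for all six pairs simultaneously. This is the same missing-pair viewpoint the paper uses later for the lower bound.

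The computation then has three steps. First, list the realised pairs \(P_{ij}\); each has at most 23 of the 64 entries. Second, enumerate quadruples \((a,b,c,d)\) in which every pair is missing, that is, \(4\)-cliques in the \(4\)-partite missing-pair graph. Third, confirm that no such clique exists. The enumeration is pruned early: after choosing \(a\), candidates for \(b\) must avoid \(P_{01}\), and so on.

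The only real obstacle is formal, not mathematical. A naive \texttt{decide} over 4096 words may be slow in the kernel, so I would first try \texttt{native\_decide}-free kernel evaluation with a \texttt{List}-based checker over \(\Fin 8\) and a proven correctness lemma. If that is too slow, I would split the check by the first coordinate into eight lemmas of 512 words each.
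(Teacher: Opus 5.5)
Your proposal is correct and is essentially the paper's proof. The certificate is the trivial bound \(|C|\le 23\) (which you use only implicitly in \(K_8(4,2)\le|C|\)) together with an exhaustive check, closed in \Lean{} by \texttt{decide}, that every one of the \(8^4=4096\) words of \((\Fin 8)^4\) lies within distance two of a listed codeword. Your reorganisation via realised coordinate pairs and the absence of a four-partite missing-pair clique is a valid equivalent form of the same finite check, since in length four \(\distH(x,c)\le2\) iff \(x\) and \(c\) agree on some coordinate pair; the paper simply checks the covering predicate word by word.
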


\begin{proof}
Let \(C\) be the displayed list of codewords.  The formal upper-bound
certificate requires two finite checks.  First, the list has cardinality at most
23:
\[
  |C|\le 23.
\]
Although the printed list contains 23 distinct words, the certificate interface
only needs this upper-cardinality statement.  Second, every word of
\((\Fin 8)^4\) is within distance two of one listed word:
\[
  \forall x:\Fin 4\to\Fin 8,\;
  \exists c\in C,\; \distH(x,c)\le 2.
\]
This is an exhaustive finite statement over \(8^4=4096\) ambient words.  Once
the list is fixed, no optimization or search remains in the upper-bound
argument.

\LeanRefs{\leanmaindecl{octonaryFourRadiusTwoCode}{36};
\leanmaindecl{octonaryFourRadiusTwo_card}{61};
\leanmaindecl{octonaryFourRadiusTwo_covers}{66};
\leanmaindecl{octonaryFourRadiusTwoExplicit}{75}.}
\end{proof}

\section{The Lower Bound}\label{sec:lower-bound}

\begin{figure*}[t]
\centering
\LowerBoundFlowDiagram
\caption{Proof flow for the lower bound \(K_8(4,2)\ge23\).  The
\(|C^\ast|\le21\) branch is discharged by counting in the missing-pair graphs;
the exact \(22\)-word branch uses the LRAT-certified profile classifier before
the common-block contradiction.}
\label{fig:lower-bound-flow}
\end{figure*}

The lower bound is the part of the result that requires the main argument.  It
is not a volume bound; it is a structural nonexistence argument for 22-word
covers.  The proof replaces a
hypothetical code by a small profile of fiber sizes and missing coordinate
pairs, proves finite classification statements about those profiles, and then
derives a block-structure contradiction.  We present the argument as a sequence
of elementary reductions followed by two finite profile-classification
statements.

The section is deliberately organized as several small lemmas rather than one
large nonexistence theorem.  Each lemma isolates one property of a hypothetical
cover: a local covering consequence, a fiber-size restriction, a missing-pair
criterion, or a finite profile exclusion.  This makes the proof easier to
follow, and it also reflects the formal development, where each property is a
separate theorem that can be reused by later steps.  Some auxiliary statements
used below are weaker than the best known informal results; they are included in
this form because these are the bounds currently formalized in \Lean.
Figure~\ref{fig:lower-bound-flow} summarizes the resulting proof flow.

\paragraph{Hypothetical counterexample.}
Assume, for contradiction, that there is a radius-two cover
\[
  C^\ast\subseteq(\Fin 8)^4,\qquad |C^\ast|\le22.
\]
The radius-two covering condition is the concrete statement
\[
  \forall x\in(\Fin 8)^4,\; \exists c\in C^\ast,\; \distH(x,c)\le2.
\]
The symbol \(C^\ast\) is fixed by this assumption.  The rest of this section
derives a contradiction from it.

\subsection{Fibers and baseline counts}\label{subsec:fibers-baseline}

For a coordinate \(i\in\Fin 4\) and symbol \(a\in\Fin 8\), write
\[
  C^\ast_{i,a}=\{\,c\in C^\ast : c_i=a\,\}.
\]
We call \(C^\ast_{i,a}\) the fiber of \(C^\ast\) over \((i,a)\).

For example, if \(u=(1,4,5,7)\), \(v=(0,1,5,1)\), and
\(w=(1,3,2,0)\) are codewords in \(C^\ast\), then
\(u,w\in C^\ast_{0,1}\), because their zeroth coordinate is \(1\), and
\(u,v\in C^\ast_{2,5}\), because their second coordinate is \(5\).

For each fixed coordinate \(i\), these eight fibers partition \(C^\ast\):
every codeword has exactly one value in coordinate \(i\).  Thus
\[
  C^\ast=\bigsqcup_{a\in\Fin 8} C^\ast_{i,a},
  \qquad
  |C^\ast|=\sum_{a\in\Fin 8}|C^\ast_{i,a}|.
\]
Here \(\bigsqcup\) denotes a disjoint union; the displayed decomposition is a
partition into pairwise disjoint fibers.
For the same coordinate \(i\) and symbol \(a\), write
\[
  S_{i,a}=\{\,x\in(\Fin 8)^4 : x_i=a\,\}.
\]
Thus \(S_{i,a}\) is the coordinate slice of \((\Fin 8)^4\) where coordinate
\(i\) is fixed to \(a\).  The other three coordinates are free, and deleting
coordinate \(i\) gives a Hamming-distance-preserving bijection from
\(S_{i,a}\) to \((\Fin 8)^3\).
For every \(i\in\Fin 4\) and \(a\in\Fin 8\), the radius-two covering condition
above implies that every point of \(S_{i,a}\) is within distance at most two of
some codeword of \(C^\ast\).

\begin{lemma}[Missing tail symbols force a tail pair]\label{lem:k731-forced-pair}
Let \(C\subseteq(\Fin 7)^3\) be a radius-one cover.  Fix a coordinate
\(j\in\Fin 3\) and write
\[
  \pi_j:(\Fin 7)^3\to(\Fin 7)^2
\]
for the projection that deletes coordinate \(j\), keeping the remaining
coordinates in their original order; for example,
\(\pi_1(x_0,x_1,x_2)=(x_0,x_2)\).  Fix also a symbol \(a\in\Fin 7\), and put
\[
  F_{j,a}=\{\,c\in C:c_j=a\,\}.
\]
Thus \(F_{j,a}\) is the analogue of the fibers \(C^\ast_{i,a}\) above:
it is the part of \(C\) whose \(j\)-th coordinate is \(a\).
Let
\begin{align*}
  B_0&=\Fin 7\setminus\{\,(\pi_j c)_0:c\in F_{j,a}\,\},\\
  B_1&=\Fin 7\setminus\{\,(\pi_j c)_1:c\in F_{j,a}\,\}.
\end{align*}
Thus \(B_0\) and \(B_1\) record the symbols that do not occur in the
first and second remaining coordinate, respectively, among the projected
fiber \(\pi_j(F_{j,a})\).
Then
\[
  B_0\times B_1
  \subseteq
  \{\,\pi_j(e):e\in C\,\}.
\]
Informally, even though \(b_0\) and \(b_1\) are absent from the two
projected coordinates of the fiber \(F_{j,a}\), the pair \((b_0,b_1)\)
must occur as the projection of some codeword of \(C\).
\end{lemma}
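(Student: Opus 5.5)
Take \(b_0\in B_0\) and \(b_1\in B_1\). The plan is to build the word \(x\in(\Fin 7)^3\) with \(x_j=a\) and \(\pi_j x=(b_0,b_1)\): coordinate \(j\) carries \(a\), and the two remaining coordinates, in their original order, carry \(b_0\) and \(b_1\). Since \(C\) is a radius-one cover, some codeword \(e\in C\) satisfies \(\distH(x,e)\le1\). We split into cases according to whether \(e_j=a\).

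\emph{Case \(e_j=a\).} Then \(e\in F_{j,a}\). By the definition of \(B_0\), the symbol \(b_0\) is not \((\pi_j e)_0\), so \(e\) differs from \(x\) in the first remaining coordinate. Likewise \(b_1\notin\{(\pi_j c)_1 : c\in F_{j,a}\}\), so \(e\) also differs from \(x\) in the second remaining coordinate. That gives \(\distH(x,e)\ge2\), a contradiction, so this case cannot occur.

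\emph{Case \(e_j\ne a\).} Now \(e\) already differs from \(x\) at coordinate \(j\). The bound \(\distH(x,e)\le1\) therefore forces \(e\) to agree with \(x\) at both other coordinates, which means \(\pi_j e=(b_0,b_1)\). Hence \((b_0,b_1)\in\{\pi_j(e):e\in C\}\), which is the desired inclusion.

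There is no genuine mathematical obstacle here. The only point needing care is bookkeeping: matching the indices of the projection \(\pi_j\) (which deletes \(j\) and reindexes the remaining two coordinates as \(0,1\)) with the coordinates of the full word. In particular, the Hamming distance must split as the contribution of coordinate \(j\) plus \(\distH(\pi_j x,\pi_j e)\). I would state this decomposition as a small auxiliary fact, proved by case analysis on \(j\in\Fin 3\). Using it, both cases reduce to elementary inequalities on a sum of three indicator terms.
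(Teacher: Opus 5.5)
Your proposal is correct and follows the paper's proof essentially step for step. You build \(x\) with \(x_j=a\) and \(\pi_j(x)=(b_0,b_1)\), take a covering codeword \(e\), exclude \(e_j=a\) because \(e\) would then disagree with \(x\) in both tail coordinates, and conclude from \(e_j\ne a\) that \(\pi_j(e)=(b_0,b_1)\); the distance-splitting fact you propose is only the bookkeeping a formal proof needs, not a different route.
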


\begin{proof}
To prove the inclusion, take an arbitrary pair
\((b_0,b_1)\in B_0\times B_1\).  Since \(\pi_j\) deletes only coordinate
\(j\), the value \(a\) in coordinate \(j\) together with the projected pair
\((b_0,b_1)\) determines a unique word \(x\in(\Fin 7)^3\):
\[
  x_j=a,\qquad \pi_j(x)=(b_0,b_1).
\]
For instance, if \(j=1\), then this word is \(x=(b_0,a,b_1)\).
Since \(C\) is a radius-one cover, there is \(e\in C\) with
\(\distH(x,e)\le1\).

If \(e_j=a\), then \(e\in F_{j,a}\).  By the definitions of \(B_0\) and \(B_1\),
the two tail coordinates of \(\pi_j(e)\) are different from \(b_0\) and \(b_1\),
respectively.  Thus \(x\) and \(e\) differ in both tail coordinates, so
\(\distH(x,e)\ge2\), a contradiction.

Therefore \(e_j\ne a\).  Thus \(x\) and \(e\) already disagree in coordinate
\(j\), the coordinate deleted by \(\pi_j\).  Since \(\distH(x,e)\le1\), there can be
no disagreement in the two tail coordinates.  Hence \(\pi_j(e)=(b_0,b_1)\),
proving that \((b_0,b_1)\in\{\,\pi_j(e):e\in C\,\}\).

\LeanRefs{\leansparsedecl{forced_tail_pair_subset_of_fiber_cover}{282}.}
\end{proof}

\begin{lemma}[Auxiliary \(7\)-ary lower bound]\label{lem:k731-lower}
\[
  K_7(3,1)\ge22.
\]
\end{lemma}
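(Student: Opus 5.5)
The plan is a fiber-counting argument in the spirit of Lemma~\ref{lem:k731-forced-pair}, using only one fixed deleted coordinate at a time. Suppose \(C\subseteq(\Fin 7)^3\) is a radius-one cover with \(M=|C|\le21\). Fix a coordinate \(j\) and a symbol \(a\), and put \(k_a=|F_{j,a}|\). By Lemma~\ref{lem:k731-forced-pair}, every pair in \(B_0\times B_1\) is the projection \(\pi_j(e)\) of some codeword \(e\in C\). The proof of that lemma shows that such an \(e\) has \(e_j\ne a\). Distinct pairs force distinct codewords, so
\[
  M-k_a\ \ge\ |B_0|\,|B_1|\ \ge\ (7-k_a)^2 .
\]
If some \(k_a\le2\), this gives \(M\ge 2+25=27\), which is a contradiction. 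Hence every \(k_a\ge3\). Since the seven fibers partition \(C\) and \(M\le21\), we get \(M=21\) and \(k_a=3\) for every \(a\).

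Next I sharpen the structure of each fiber. If the three words of \(F_{j,a}\) repeated a symbol in one remaining coordinate, then that coordinate would miss at least \(5\) symbols. Then \(|B_0||B_1|\ge 5\cdot4=20\), so \(M\ge3+20=23>21\), again a contradiction. Thus each fiber projects to three points with pairwise distinct rows and pairwise distinct columns, and \(|B_0|=|B_1|=4\).

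The same reasoning applies to every choice of the deleted coordinate \(j\). Consequently, for every coordinate \(i\) and every symbol \(b\), exactly three codewords have \(c_i=b\). Moreover, any two codewords agree in at most one coordinate. In particular \(\pi_j\) is injective on \(C\), and its image consists of \(21\) points of \(\Fin7\times\Fin7\), with exactly three points in each row and each column.

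The final step is a row count inside the \(4\times4\) grid \(B_0\times B_1\) for a fixed \(a\). All \(16\) cells must be hit by projections of codewords. Each row \(\{b_0\}\times\Fin7\) of the projected grid contains exactly three projected codewords, namely the fiber of the corresponding coordinate at \(b_0\). So at most three of the four cells in that row of \(B_0\times B_1\) can be covered. Over the four rows of \(B_0\) this gives at most \(12<16\) covered cells, which contradicts Lemma~\ref{lem:k731-forced-pair}. Hence \(M\ge22\).

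I expect the main obstacle to be spotting this last double count, since the first counting bound alone only gives \(M\ge19\). Formally, the delicate point is transferring the ``exactly three per symbol'' conclusion between the different deleted coordinates. That requires applying the first two paragraphs to each \(j\in\Fin3\) before running the row count.
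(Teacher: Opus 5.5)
Your argument is correct, and its first half is the paper's: the forced-pair lemma (Lemma~\ref{lem:k731-forced-pair}) with \(|B_0|,|B_1|\ge5\) rules out fibers of size at most two. Hence \(|C|=21\) and every fiber \(F_{j,a}\), for every \(j\), has exactly three words. This step is uniform in \(j\), so there is nothing delicate about transferring it between deleted coordinates.

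The finishes differ. The paper deletes coordinate \(0\) and shows that \(\pi_0\) maps \(C\) onto \((\Fin 7)^2\). For a target \((b,c)\), the words with second coordinate \(b\) or third coordinate \(c\) use at most \(3+3=6\) first-coordinate symbols. Some \(a\) avoids all of them, so \((b,c)\in B_0\times B_1\) for \(F_{0,a}\), and \(49>21\) gives the contradiction. You instead stay inside a single grid \(B_0\times B_1\). One \(b_0\in B_0\) forces the \(|B_1|\ge 7-3=4\) pairs in \(\{b_0\}\times B_1\) to be realized by distinct codewords that all carry \(b_0\) in the same coordinate. That puts at least four words into a fiber of size three. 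Your route is more local: one fiber, one row, and no pigeonhole choice of \(a\). The paper's route yields the stronger global surjectivity statement.

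Much of your proposal is unnecessary:
\begin{itemize}
\item The entire second paragraph (distinct rows and columns, two codewords agreeing in at most one coordinate, injectivity of \(\pi_j\)) can be dropped. The bound \(|B_1|\ge4\) follows directly from \(|F_{j,a}|=3\).
\item ``At most three projected points per row'' already follows from the fiber size, without injectivity.
\item A single row giving \(4>3\) suffices, so the \(12<16\) count over all rows is more than you need.
\item The fact that \(e_j\ne a\) follows from the lemma's statement, not just its proof: any \(e\) with \(\pi_j(e)\in B_0\times B_1\) carries a symbol absent from \(F_{j,a}\), so it cannot lie in \(F_{j,a}\).
\end{itemize}
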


\begin{proof}
This is weaker than the classical value \(K_7(3,1)=25\) proved by
Kalbfleisch and Stanton~\cite{lit_kalbfleisch_stanton_1969}.  The formalized
argument for the weaker bound is the following counting proof.

Suppose, for contradiction, that \(C\subseteq(\Fin 7)^3\) is a radius-one cover
with \(|C|\le21\).  For a coordinate \(j\) and a symbol \(a\), define
\[
  F_{j,a}=\{\,c\in C:c_j=a\,\}.
\]
We first show that no such fiber can have size at most two.  Indeed, suppose
\(|F_{j,a}|\le2\).  Apply Lemma~\ref{lem:k731-forced-pair} to this \(j\) and
\(a\), using the projection \(\pi_j\) that deletes coordinate \(j\), and let
\(B_0,B_1\) be the missing-symbol sets defined there.  Fix one
of the two coordinates different from \(j\).  Each codeword in \(F_{j,a}\) has
exactly one symbol in that coordinate, so the words of \(F_{j,a}\) use at most
two symbols there.  Since the alphabet is \(\Fin 7\), at least \(7-2=5\)
symbols are missing in that coordinate.  This applies to both coordinates
different from \(j\), so \(|B_0|\ge5\) and \(|B_1|\ge5\).  The lemma then says
that every pair in \(B_0\times B_1\) occurs as the \(\pi_j\)-projection of some
codeword.  Thus at least
\(5\cdot5=25\) projected pairs occur.  This is impossible when \(|C|\le21\),
since each codeword contributes only one such pair.

Hence every fiber \(F_{j,a}\) has size at least three.  Since the seven
fibers in any fixed coordinate partition \(C\), the assumption \(|C|\le21\)
forces every such fiber to have size exactly three.

Now fix arbitrary symbols \(b,c\in\Fin 7\).  We show that some codeword of
\(C\) has second coordinate \(b\) and third coordinate \(c\).  The set of
first-coordinate symbols occurring in words with second coordinate \(b\) has
size at most three, and the set of first-coordinate symbols occurring in words
with third coordinate \(c\) also has size at most three.  The union of these
two sets has size at most six, so choose a first-coordinate symbol \(a\)
outside this union.  Then, in the fiber \(F_{0,a}\), the symbol \(b\) is
missing from coordinate \(1\), and the symbol \(c\) is missing from coordinate
\(2\).  Applying Lemma~\ref{lem:k731-forced-pair} with \(j=0\), the projection
\(\pi_0\) deletes the first coordinate and keeps coordinates \(1\) and \(2\).
Thus \(b\in B_0\) and \(c\in B_1\), so the lemma shows that \((b,c)\) occurs
as the last-two-coordinate projection of some codeword.  Since
\((b,c)\in(\Fin 7)^2\) was arbitrary, all \(7^2=49\) last-two-coordinate pairs
occur.  On the other hand, the set \(\{\,\pi_0(e):e\in C\,\}\) is the image of
\(C\) under the map \(e\mapsto\pi_0(e)\), so it has at most \(|C|\le21\)
elements.  This contradicts the \(49\) distinct pairs just obtained, and proves
\(K_7(3,1)\ge22\).

\LeanRefs{\leansparsedecl{qarySevenThreeOneLowerTwentyTwo}{400}, of type
\(\texttt{QaryKLower 7 3 1 22}\).}
\end{proof}

\begin{lemma}[Fiber lower bound]\label{lem:fiber-lower}
For every coordinate \(i\in\Fin 4\) and every symbol \(a\in\Fin 8\),
\[
  |C^\ast_{i,a}|\ge2.
\]
\end{lemma}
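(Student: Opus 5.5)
The plan is to argue by contradiction and reduce to Lemma~\ref{lem:k731-lower}. Suppose \(|C^\ast_{i,a}|\le1\) for some coordinate \(i\) and symbol \(a\). From the slice \(S_{i,a}\cong(\Fin 8)^3\) we will cut out a \(7\)-ary subcube \((\Fin 7)^3\) that is radius-one covered by at most \(21\) words, which Lemma~\ref{lem:k731-lower} forbids.

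\textbf{Step 1 (choose a discarded codeword).} Pick a codeword \(z\in C^\ast\): the unique element of \(C^\ast_{i,a}\) if the fiber is nonempty, and an arbitrary codeword otherwise (note \(C^\ast\ne\emptyset\), since it covers a nonempty space). Write \(t\in(\Fin 8)^3\) for the tail of \(z\), i.e.\ \(z\) with coordinate \(i\) deleted. In each of the three tail coordinates \(k\), delete the symbol \(t_k\) from the alphabet. Let \(Y\) be the set of words \(x\in S_{i,a}\) whose tail lies in \(\prod_k(\Fin 8\setminus\{t_k\})\). After relabeling each coordinate's remaining alphabet to \(\Fin 7\), the tails of words in \(Y\) are identified with \((\Fin 7)^3\).

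\textbf{Step 2 (local covering consequence).} For \(x\in Y\), the cover gives \(c\in C^\ast\) with \(\distH(x,c)\le2\).
\begin{itemize}[leftmargin=*]
\item The word \(z\) is useless on \(Y\). If \(z_i=a\), its tail differs from that of \(x\) in all three tail coordinates, so \(\distH(x,z)=3\). If \(z_i\ne a\), then \(x\) and \(z\) also differ in coordinate \(i\), so \(\distH(x,z)=4\).
\item No other codeword in the fiber covers \(x\), because \(C^\ast_{i,a}\subseteq\{z\}\).
\end{itemize}
Hence \(x\) is covered by some \(c\in C^\ast\setminus\{z\}\) with \(c_i\ne a\). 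Since \(x\) and \(c\) already disagree in coordinate \(i\), their tails are at distance at most one.

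\textbf{Step 3 (project to the \(7\)-ary cube).} Fix, in each tail coordinate \(k\), a replacement symbol \(s_k\ne t_k\). Define \(\varphi:(\Fin 8)^3\to\prod_k(\Fin 8\setminus\{t_k\})\) by sending \(t_k\mapsto s_k\) coordinatewise and fixing every other symbol. The tail \(y\) of any \(x\in Y\) is fixed by \(\varphi\). Also, \(\varphi\) never creates a new disagreement with a word that avoids the \(t_k\). Therefore \(\distH(y,\varphi(\text{tail}(c)))\le\distH(y,\text{tail}(c))\le1\). So the set
\[
  D=\{\varphi(\text{tail}(c)) : c\in C^\ast\setminus\{z\}\}
\]
is a radius-one cover of the \(7\)-ary cube, and \(|D|\le|C^\ast|-1\le21\). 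This contradicts \(K_7(3,1)\ge22\) from Lemma~\ref{lem:k731-lower}, so \(|C^\ast_{i,a}|\ge2\).

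\textbf{Main obstacle.} The delicate point is the empty-fiber case, where the naive bound on codewords outside the fiber is \(22\) rather than \(21\). The choice of \(z\) in Step~1 handles this: deleting exactly \(z\)'s tail symbols makes \(z\) at distance \(\ge3\) from all of \(Y\), which frees one word uniformly in both cases. The remaining work, mostly bookkeeping in \Lean{}, is the monotonicity of distance under \(\varphi\) and the relabeling of \(\prod_k(\Fin 8\setminus\{t_k\})\) to \((\Fin 7)^3\).
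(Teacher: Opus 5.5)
Your proof is correct, but it is organized differently from the paper's. The paper splits into two cases. An empty fiber is excluded by projecting all of \(C^\ast\) to \((\Fin 8)^3\) and using the volume bound \(22\cdot22=484<512\) for radius-one balls. Only the singleton case \(C^\ast_{i,a}=\{c\}\) is reduced to the \(7\)-ary subcube avoiding the tail symbols of \(c\), and hence to Lemma~\ref{lem:k731-lower}. Your discarded codeword \(z\) (chosen arbitrarily when the fiber is empty, where it lies at distance \(4\) from \(Y\)) sends both cases through the same subcube reduction. This gives one construction instead of two, at the cost of invoking \(K_7(3,1)\ge22\) where the paper needs only a one-line volume count. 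Your second difference is the global retraction \(\varphi\) that replaces \(t_k\) by \(s_k\). Because it can only remove disagreements with words avoiding the \(t_k\), it gives \(\distH(y,\varphi(w))\le\distH(y,w)\) in one step. The paper instead moves each projected center individually, splitting on whether it has zero, one, or at least two excluded coordinates, and discards centers whose balls miss the subcube. Your map needs neither the case split nor the discarding, which is also the more convenient form for a formalization.
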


\begin{proof}
Fix \(i\) and \(a\), and consider the coordinate layer \(S_{i,a}\).
Deleting coordinate \(i\) gives a bijection \(S_{i,a}\cong(\Fin 8)^3\).

Suppose first that \(C^\ast_{i,a}=\emptyset\).  Every codeword in \(C^\ast\)
then disagrees in coordinate \(i\) with every point of \(S_{i,a}\).  Hence, if
a codeword \(c\in C^\ast\) covers a point \(x\in S_{i,a}\) at radius two, then
\(c\) and \(x\) already differ in coordinate \(i\).  After deleting coordinate
\(i\), their projections differ in at most one remaining coordinate.  Thus the
projections of the words of \(C^\ast\) form a radius-one cover of
\((\Fin 8)^3\) with at most 22 centers.  This is impossible by the
volume bound, since a radius-one ball in \((\Fin 8)^3\) has
\[
  1+3(8-1)=22
\]
points.  The first \(22\) in the following product is the assumed upper bound
on the number of centers; the second \(22\) is the size of one radius-one ball.
Hence, even before accounting for overlaps, these balls contain at most
\[
  22\cdot22=484<512=8^3=|(\Fin 8)^3|
\]
points.  This contradicts the projected radius-one cover, and therefore
\(C^\ast_{i,a}\ne\emptyset\).

It remains to exclude \(|C^\ast_{i,a}|=1\).  Let \(C^\ast_{i,a}=\{c\}\), and let \(S\)
be the subcube of \(S_{i,a}\) in which each remaining coordinate avoids the
corresponding symbol of \(c\):
\[
  S=\{\,x\in(\Fin 8)^4 :
      x_i=a\text{ and }x_j\ne c_j\text{ for all }j\ne i\,\}.
\]
Thus \(S\) consists of the words in the slice \(x_i=a\) that avoid matching
the codeword \(c\) in every other coordinate.  For example, if \(i=0\) and
\(c=(a,c_1,c_2,c_3)\), then
\[
  S=\{\,(a,x_1,x_2,x_3):x_1\ne c_1,\ x_2\ne c_2,\ x_3\ne c_3\,\}.
\]
In particular, \(S\) is nonempty: each of the three free coordinates has
\(7\) choices.
After deleting coordinate \(i\), each remaining coordinate \(j\ne i\) has the
seven allowed symbols \(\Fin 8\setminus\{c_j\}\).  These three \(7\)-element
sets need not be the same.  Choosing a bijection from each of them to \(\Fin 7\)
identifies \(S\) with \((\Fin 7)^3\), and this coordinatewise
relabeling
preserves Hamming distance.

The word \(c\) agrees with every \(x\in S\) in coordinate \(i\), since
\(c_i=a=x_i\).  In every other coordinate \(j\ne i\), the definition of \(S\)
forces \(x_j\ne c_j\).  Therefore \(\distH(x,c)=3\) for all \(x\in S\), so
\(c\) covers no point of \(S\) at radius two.

Since \(C^\ast\) covers \(S\) and \(c\) covers no point of \(S\), every
\(x\in S\) is covered by some \(d\in C^\ast\setminus\{c\}\).  Since \(c\) is
the only word in the fiber \(C^\ast_{i,a}\), this \(d\) satisfies \(d_i\ne a\).
But \(x_i=a\), so \(d\) already disagrees with \(x\) in coordinate \(i\).
Because \(d\) covers \(x\) at radius two in the original four coordinates,
there can be at most one further disagreement among the remaining three
coordinates.  Consequently, after deleting coordinate \(i\), the projections of
the words in \(C^\ast\setminus\{c\}\) cover the projected copy of \(S\) at
radius one.  However, these projected words, viewed as radius-one centers, still
lie in the ambient space \((\Fin 8)^3\).  For example, if after deleting
coordinate \(i\) the forbidden symbols are \(c_1,c_2,c_3\), then the projected
copy of \(S\) consists of the triples \((y_1,y_2,y_3)\) with
\(y_k\ne c_k\) for each \(k\).  A projected codeword may still have one of
these forbidden symbols as a coordinate, and therefore may lie outside the
projected copy of \(S\).  The next step replaces each ambient radius-one center
by a center inside the projected copy of \(S\), in such a way that every point
of the projected copy of \(S\) covered by the original center is still covered
after the replacement.

Let \(z=\pi_i(d)\) be the projection of one of the words
\(d\in C^\ast\setminus\{c\}\).  Thus \(z\in(\Fin 8)^3\) is a radius-one
center for the projected problem, although it need not lie in the projected
copy of \(S\).  We use the following observation, applied to each such
projected center \(z\).  The intersection of its ambient radius-one ball with
the \(7\)-ary subcube is either empty or contained in a radius-one ball whose
center lies in the subcube.  Indeed, if \(z\) has no excluded coordinate, then
\(z\) already lies in the subcube.  If \(z\) has exactly one excluded
coordinate \(k\) and the intersection is nonempty, every point \(y\) in this
intersection must differ from \(z\) in coordinate \(k\), because \(y_k\) is
allowed while \(z_k\) is excluded.  Since \(y\) is within radius one of \(z\),
it therefore agrees with \(z\) in the other two coordinates.  Replacing
\(z_k\) by any allowed symbol gives a center in the subcube whose radius-one
ball still contains every such \(y\).  Finally, if \(z\) has at least two
excluded coordinates, every point of the subcube differs from \(z\) in at
least two coordinates, so the intersection is empty.

Replacing the projected ambient centers in this way gives a radius-one cover of
the projected copy of \(S\) whose centers lie inside the \(7\)-ary subcube.
Some moved centers may coincide, and centers with empty intersection may be
discarded, but this can only decrease the number of centers.  Therefore the new
cover still has at most \(21\) centers, since \(|C^\ast|\le22\) and the word
\(c\) covers no point of \(S\).

But, after the coordinatewise relabeling above, the projected copy of \(S\) is
the Hamming space \((\Fin 7)^3\).  Thus such a cover would be a radius-one
cover of a \(7\)-ary Hamming space of length three with at most \(21\) centers.
This contradicts Lemma~\ref{lem:k731-lower}.  Hence the assumed case
\(|C^\ast_{i,a}|\le1\) is impossible, and every fiber \(C^\ast_{i,a}\) has at
least two codewords.

\LeanRefs{\leanmaindecl{octonaryFourRadiusTwo_symbol_fiber_card_ge_two_of_card_le_twenty_two}{374};
\leanmaindecl{octonaryFourRadiusTwo_no_singleton_symbol_of_card_le_twenty_two}{358};
\leansparsedecl{qarySevenThreeOneLowerTwentyTwo}{400}.}
\end{proof}

\begin{definition}[Fiber excess above the baseline]\label{def:fiber-excess}
For the fixed hypothetical cover \(C^\ast\), set \(m=|C^\ast|\).  For each
coordinate \(i\in\Fin 4\), define the excess function
\[
  \varepsilon_i:\Fin 8\to\mathbb{N},
  \qquad
  \varepsilon_i(a)=|C^\ast_{i,a}|-2.
\]
\end{definition}

This notation is attached only to the fixed cover \(C^\ast\).  The number
\(\varepsilon_i(a)\) measures how many codewords lie in the fiber
\(C^\ast_{i,a}\) beyond the baseline of two codewords forced by
Lemma~\ref{lem:fiber-lower}.  Thus the quantities \(\varepsilon_i(a)\) are
nonnegative.  For every coordinate \(i\),
\[
  \sum_{a\in\Fin 8}\varepsilon_i(a)=m-16.
\]
For a fixed coordinate \(i\), the eight fibers \(C^\ast_{i,a}\), \(a\in\Fin 8\),
partition \(C^\ast\).  Lemma~\ref{lem:fiber-lower} gives the baseline
contribution \(8\cdot2=16\), so the sum of the excess values is precisely the
amount by which the eight fiber sizes exceed this forced baseline.  Since
\(m\le22\), this total excess is at most six; it is exactly six when \(m=22\),
and at most five when \(m\le21\).

\begin{center}
  \captionsetup{type=figure,hypcap=false,font=footnotesize,skip=2pt}
  \FiberCountDiagram
  \caption{Fiber counting for one fixed coordinate \(i\): columns are the
  fibers \(C^\ast_{i,a}\), the two lower boxes are forced by
  Lemma~\ref{lem:fiber-lower}, and dashed regions are schematic extras
  \(\varepsilon_i(a)\).}
  \label{fig:fiber-counts}
\end{center}

\subsection{Missing pair graphs}\label{subsec:missing-pair-graphs}

\begin{definition}[Missing-pair profile]\label{def:missing-profile}
For each coordinate pair \(0\le i<j<4\), define the missing-pair graph
\(E_{ij}\) as follows.  Its left vertices are the eight symbols in coordinate
\(i\), and its right vertices are the eight symbols in coordinate \(j\).  An
edge from the left vertex \(a\in\Fin 8\) to the right vertex \(b\in\Fin 8\)
records that the coordinate pair \((a,b)\) is missing from \(C^\ast\).  Thus we
identify the edge set with a subset
\[
  E_{ij}\subseteq \Fin 8\times\Fin 8
\]
defined by
\[
  (a,b)\in E_{ij}
  \quad\Longleftrightarrow\quad
  \text{no }c\in C^\ast\text{ satisfies }c_i=a,\ c_j=b.
\]
\end{definition}

The fiber bounds enter these graphs through their \(8\times8\)
adjacency-matrix view.  For \(E_{ij}\), rows are indexed by symbols in
coordinate \(i\), columns are indexed by symbols in coordinate \(j\), and the
entry \((a,b)\) is marked missing exactly when \((a,b)\in E_{ij}\).  Fix
\(i<j\).  The row indexed by \(a\in\Fin 8\) can be filled only by codewords in
the fiber
\[
  C^\ast_{i,a}=\{\,c\in C^\ast:c_i=a\,\},
\]
because every pair in that row has first coordinate \(a\).  Each such codeword
realizes at most one entry in the row, namely the entry in column \(c_j\).
Consequently the row at \(a\) has at least
\[
  8-|C^\ast_{i,a}| = 6-\varepsilon_i(a)
\]
entries belonging to \(E_{ij}\).  The same argument with the two coordinates exchanged shows
that the column at \(b\in\Fin 8\) has at least
\[
  8-|C^\ast_{j,b}| = 6-\varepsilon_j(b)
\]
entries belonging to \(E_{ij}\).

For \(i<j\) and \(u\in\Fin 8\), write
\[
  N_{ij}(u)=\{\,v\in\Fin 8 : (u,v)\in E_{ij}\,\}.
\]
Thus \(N_{ij}(u)\) is the set of symbols in coordinate \(j\) that form a
missing pair with the symbol \(u\) in coordinate \(i\), or equivalently the set
of column symbols \(v\) for which the row entry \((u,v)\) belongs to
\(E_{ij}\) in the adjacency-matrix view.
Only codewords in \(C^\ast_{i,u}\) can make entries in this row present, and
each such codeword can make at most one column present.  Since
\(|C^\ast_{i,u}|=2+\varepsilon_i(u)\), at least
\[
  8-(2+\varepsilon_i(u))=6-\varepsilon_i(u)
\]
column symbols \(v\) remain with \((u,v)\in E_{ij}\).  Thus
\begin{equation}\label{eq:missing-neighbor-row-bound}
  |N_{ij}(u)|\ge 6-\varepsilon_i(u).
\end{equation}

\begin{lemma}[No missing-pair clique]\label{lem:missing-clique}
The six graphs \(E_{ij}\) contain no simultaneous four-partite clique.  That is,
there are no \(a,b,c,d\in\Fin 8\) such that
\[
  (a,b)\in E_{01},\quad (a,c)\in E_{02},\quad (a,d)\in E_{03},
\]
and
\[
  (b,c)\in E_{12},\quad (b,d)\in E_{13},\quad (c,d)\in E_{23}.
\]
\end{lemma}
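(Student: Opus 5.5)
We argue by contradiction: suppose \(a,b,c,d\in\Fin 8\) satisfy all six membership conditions, and look at the single ambient word
\[
  x=(a,b,c,d)\in(\Fin 8)^4 .
\]
By the standing radius-two covering hypothesis on \(C^\ast\), there is a codeword \(e\in C^\ast\) with \(\distH(x,e)\le2\).

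Since \(x\) and \(e\) differ in at most two of the four coordinates, they agree in at least two. So there are indices \(i<j\) in \(\Fin 4\) with \(e_i=x_i\) and \(e_j=x_j\). We then want to say that the codeword \(e\) realizes the coordinate pair \((x_i,x_j)\) at positions \((i,j)\), that is, \((x_i,x_j)\notin E_{ij}\) by Definition~\ref{def:missing-profile}.

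The six index pairs \(\{0,1\},\{0,2\},\{0,3\},\{1,2\},\{1,3\},\{2,3\}\) give the six pairs \((x_i,x_j)\):
\[
  (a,b),\ (a,c),\ (a,d),\ (b,c),\ (b,d),\ (c,d).
\]
These are exactly the six pairs assumed to lie in \(E_{01},E_{02},E_{03},E_{12},E_{13},E_{23}\) respectively. Whichever pair of agreeing coordinates occurs, we get a membership contradicting the hypothesis. This matches the informal description in Figure~\ref{fig:lower-bound-flow}: a radius-two coverword must realize one of six pairs.

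The only slightly delicate step is the pigeonhole claim that distance at most two in length four forces two agreeing coordinates. It follows because the agreement set has cardinality \(4-\distH(x,e)\ge2\), so it contains some \(i<j\). In a formalization I would first prove this as a small finite lemma about \(\Fin 4\), or simply case-split over the six pairs by decidability. I expect no real obstacle. The argument uses neither Lemma~\ref{lem:fiber-lower} nor the fiber excesses; it uses only the covering property and the definition of the missing-pair graphs.
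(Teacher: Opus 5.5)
Your proof is correct and is essentially the paper's argument. You take \(x=(a,b,c,d)\), obtain a codeword \(e\) with \(\distH(x,e)\le2\), use that they agree on some coordinate pair \(i<j\), and conclude that \((x_i,x_j)\notin E_{ij}\), contradicting the corresponding assumed clique edge; the paper likewise uses only the covering property and the definition of the missing-pair graphs.
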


\begin{center}
  \captionsetup{type=figure,hypcap=false,font=footnotesize,skip=2pt}
  \MissingCliqueDiagram
  \caption{Forbidden configuration in Lemma~\ref{lem:missing-clique}: if all
  six coordinate pairs of \(x=(a,b,c,d)\) lie in the corresponding \(E_{ij}\),
  then no radius-two coverword can cover \(x\).}
  \label{fig:missing-clique}
\end{center}

\begin{proof}
Suppose such \(a,b,c,d\) exist, and set \(x=(a,b,c,d)\).  The assumptions say
that all six two-coordinate projections of \(x\) are missing from \(C^\ast\).
For instance, \((a,b)\in E_{01}\) says that no codeword of \(C^\ast\) has
coordinates \(0,1\) equal to \((a,b)\), and \((a,c)\in E_{02}\) says the same
for coordinates \(0,2\).

Since \(C^\ast\) is a radius-two cover, there is a codeword \(e\in C^\ast\)
with \(\distH(x,e)\le2\).  The words \(x\) and \(e\) have length four, so
distance at most two means that they disagree in at most two coordinates.
Equivalently, they agree in at least two coordinates.  Among the four
coordinate positions \(0,1,2,3\), any two agreement positions are one of
\[
  \{0,1\},\{0,2\},\{0,3\},\{1,2\},\{1,3\},\{2,3\}.
\]
These are exactly the six coordinate pairs whose missing-pair graphs are
\(E_{01},E_{02},E_{03},E_{12},E_{13},E_{23}\).  If, for example, the agreement
coordinates are \(0\) and \(2\),
then \(e_0=a\) and \(e_2=c\).  Thus the pair \((a,c)\) occurs in \(C^\ast\) in
coordinates \(0,2\), contradicting \((a,c)\in E_{02}\).  The other five
coordinate pairs give the same contradiction with the corresponding assumed
edge of the clique.

\LeanRefs{\leanmaindecl{no_pairMissing4_clique_of_covers}{4822};
\leanmaindecl{dist_le_two_agrees_pair4}{4794}, the formal
distance-at-most-two agreement step.}
\end{proof}

\subsection{Covers with at most 21 words}\label{subsec:covers-at-most-21}

\begin{proposition}[No cover with at most 21 words]\label{prop:no-five-excess}
The fixed cover \(C^\ast\) cannot satisfy \(|C^\ast|\le21\).
\end{proposition}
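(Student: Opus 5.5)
The plan is to contradict Lemma~\ref{lem:missing-clique}: assuming \(m=|C^\ast|\le21\), I construct \(a,b,c,d\in\Fin 8\) forming a simultaneous four-partite clique in the six graphs \(E_{ij}\). The only inputs are the excess bookkeeping of Definition~\ref{def:fiber-excess}, namely \(\sum_{a}\varepsilon_i(a)=m-16\le5\) for every coordinate \(i\), and the row bound \eqref{eq:missing-neighbor-row-bound}. I also use the elementary inclusion--exclusion estimate for subsets \(X,Y,Z\subseteq\Fin 8\):
\[
|X\cap Y|\ge|X|+|Y|-8,\qquad |X\cap Y\cap Z|\ge|X|+|Y|+|Z|-16 .
\]
Two counting facts follow from the total excess being at most five. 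At most five symbols \(u\) have \(\varepsilon_i(u)\ge1\). At most two symbols have \(\varepsilon_i(u)\ge2\), since three such symbols would already contribute \(6\).

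\emph{Choosing \(a\) and \(b\).} First choose \(a\in\Fin 8\) with \(\varepsilon_0(a)=0\); at least three such symbols exist. By \eqref{eq:missing-neighbor-row-bound}, each of \(N_{01}(a)\), \(N_{02}(a)\), \(N_{03}(a)\) has at least six elements. Next choose \(b\in N_{01}(a)\) with \(\varepsilon_1(b)=0\). This is possible because \(N_{01}(a)\) has at least six elements while at most five symbols have positive \(\varepsilon_1\). Then \(|N_{12}(b)|\ge6\) and \(|N_{13}(b)|\ge6\).

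\emph{Choosing \(c\).} Next choose \(c\in N_{02}(a)\cap N_{12}(b)\). This intersection has at least \(6+6-8=4\) elements, and I pick \(c\) in it with \(\varepsilon_2(c)\le1\). Such a \(c\) exists since at most two symbols have \(\varepsilon_2\ge2\). Hence \(|N_{23}(c)|\ge5\).

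\emph{Choosing \(d\).} Finally, the set \(N_{03}(a)\cap N_{13}(b)\cap N_{23}(c)\) has at least \(6+6+5-16=1\) element, so I can choose \(d\) in it. By construction all six relations \((a,b)\in E_{01}\), \((a,c)\in E_{02}\), \((a,d)\in E_{03}\), \((b,c)\in E_{12}\), \((b,d)\in E_{13}\), \((c,d)\in E_{23}\) hold. This contradicts Lemma~\ref{lem:missing-clique}, so \(|C^\ast|\le21\) is impossible.

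The delicate point is the choice of \(c\), where the slack is smallest. The pairwise intersection guarantees only four candidates, and the final triple intersection has room for a total excess of only one among \(\varepsilon_1(b)\) and \(\varepsilon_2(c)\). This is why \(b\) is taken with zero excess and \(c\) with excess at most one. With total excess six (the case \(m=22\)) this counting breaks: six symbols could have positive \(\varepsilon_1\), so a zero-excess \(b\in N_{01}(a)\) is no longer guaranteed. That failure is consistent with the paper needing a separate argument for \(m=22\).
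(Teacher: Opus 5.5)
Your proposal is correct and follows the paper's proof essentially step by step. It makes the same successive choices: $a$ with $\varepsilon_0(a)=0$, then $b\in N_{01}(a)$ with $\varepsilon_1(b)=0$, then $c\in N_{02}(a)\cap N_{12}(b)$ with $\varepsilon_2(c)\le1$, then $d$ in $N_{03}(a)\cap N_{13}(b)\cap N_{23}(c)$, ending in a contradiction with Lemma~\ref{lem:missing-clique}. The only differences are cosmetic: you bound the triple intersection with one three-set inclusion--exclusion estimate instead of two pairwise steps, and you justify the choice of $c$ by noting that at most two symbols can have $\varepsilon_2\ge2$, where the paper observes that four such symbols would give excess at least $8$.
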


\begin{proof}
The idea is to use the small budget above the baseline to build a forbidden
configuration in the missing-pair graphs.  With at most \(21\) words, each
coordinate has total excess at most five beyond the mandatory baseline of two
words in each fiber.  A row indexed by a symbol \(a\) with
\(\varepsilon_i(a)=0\) then has at least six neighbors in the corresponding
missing-pair graph, and later a row indexed by a symbol \(c\) with
\(\varepsilon_i(c)\le1\) still has at least five such neighbors.  Because all
neighbor sets live in the same eight-symbol alphabet, sets of sizes six, six,
and five are forced to overlap in the way needed below.  We choose symbols
\(a,b,c,d\) successively from these intersections so that every one of the six
coordinate pairs between them is missing.  This gives the simultaneous
four-partite clique excluded by Lemma~\ref{lem:missing-clique}.

Assume \(|C^\ast|\le21\).  Fix a coordinate \(i\).  By the fiber partition
property and the identity \(|C^\ast_{i,a}|=2+\varepsilon_i(a)\),
\[
  |C^\ast|
  =
  \sum_{a\in\Fin 8}|C^\ast_{i,a}|
  =
  16+\sum_{a\in\Fin 8}\varepsilon_i(a).
\]
Thus \(16\) is the forced baseline for one fixed coordinate: two codewords in
each of its eight fibers, as visualized in Figure~\ref{fig:fiber-counts}.
Since the present case assumes
\(|C^\ast|\le21\), only \(21-16=5\) units of excess can be distributed
among those fibers.  Thus, for this fixed coordinate \(i\),
\begin{equation}\label{eq:small-cover-total-excess}
  \sum_{a\in\Fin 8}\varepsilon_i(a)\le5.
\end{equation}
The coordinate \(i\) was arbitrary, so we may use this bound for each
coordinate \(i\in\Fin 4\).
Applying this to coordinate \(0\), consider the eight values
\(\varepsilon_0(a)\) with \(a\in\Fin 8\).  They are nonnegative by
Lemma~\ref{lem:fiber-lower}, and their sum is at most five.  Therefore at most
five of them can be positive, so at least three of the eight values are zero.
Choose \(a\in\Fin 8\) with \(\varepsilon_0(a)=0\).  Applying the neighbor
bound~\eqref{eq:missing-neighbor-row-bound} with \(i=0\) and \(u=a\) gives
\(|N_{0j}(a)|\ge 6-\varepsilon_0(a)=6\) for \(j=1,2,3\).  Hence
\[
  |N_{01}(a)|\ge6,\qquad |N_{02}(a)|\ge6,\qquad |N_{03}(a)|\ge6.
\]

The same count in coordinate \(1\) shows that at most five symbols have positive
\(\varepsilon_1\)-value.  Because \(|N_{01}(a)|\ge6\), some
\(b\in N_{01}(a)\) has \(\varepsilon_1(b)=0\).  The row bounds in \(E_{12}\) and
\(E_{13}\) give
\[
  |N_{12}(b)|\ge6,\qquad |N_{13}(b)|\ge6.
\]

We use the elementary estimate
\[
  |A\cap B|=|A|+|B|-|A\cup B|\ge |A|+|B|-8
\]
for subsets \(A,B\subseteq\Fin 8\).  Since
\(|N_{02}(a)|\ge6\) and \(|N_{12}(b)|\ge6\), this gives
\[
  |N_{02}(a)\cap N_{12}(b)|\ge 6+6-8=4.
\]
We now need a choice of \(c\) in this intersection with
\(\varepsilon_2(c)\le1\), because this will make the row \(N_{23}(c)\) large
enough for the final intersection step.
By the total-excess bound~\eqref{eq:small-cover-total-excess}, applied with
\(i=2\), the total amount above the baseline in coordinate \(2\) is at most
five.  If every element of this intersection had \(\varepsilon_2\)-value at
least \(2\), then the total excess in coordinate \(2\) would be at least
\(4\cdot2=8\), contradicting
\(\sum_{u\in\Fin 8}\varepsilon_2(u)\le5\).  Hence one can choose
\[
  c\in N_{02}(a)\cap N_{12}(b)
  \qquad\text{with}\qquad
  \varepsilon_2(c)\le1.
\]
Now apply the neighbor bound~\eqref{eq:missing-neighbor-row-bound} to
\(E_{23}\), with \(i=2\), \(j=3\), and \(u=c\).  The rows of \(E_{23}\) are
indexed by coordinate-\(2\) symbols, so the row \(c\) records the
coordinate-\(3\) symbols \(d\) for which \((c,d)\) is missing.  Since
\(\varepsilon_2(c)\le1\), the bound gives
\[
  |N_{23}(c)|\ge 6-\varepsilon_2(c)\ge5.
\]

Applying the same estimate to \(N_{03}(a)\) and \(N_{13}(b)\) gives
\[
  |N_{03}(a)\cap N_{13}(b)|\ge 6+6-8=4.
\]
Applying it once more to
\[
  N_{03}(a)\cap N_{13}(b)
  \qquad\text{and}\qquad
  N_{23}(c)
\]
gives
\[
  |N_{03}(a)\cap N_{13}(b)\cap N_{23}(c)|
  \ge 4+5-8=1.
\]
Thus this triple intersection is nonempty.  Choose \(d\) in it.

By the definition of the
sets \(N_{ij}\), the earlier choices give
\[
  (a,b)\in E_{01},\qquad
  (a,c)\in E_{02},\qquad
  (b,c)\in E_{12},
\]
and the choice of \(d\) gives
\[
  (a,d)\in E_{03},\qquad
  (b,d)\in E_{13},\qquad
  (c,d)\in E_{23}.
\]
Hence all six pairs
\[
  (a,b),\ (a,c),\ (a,d),\ (b,c),\ (b,d),\ (c,d)
\]
are missing in the corresponding graphs.  Thus
\((a,b,c,d)\) is a simultaneous four-partite clique in the six graphs
\(E_{ij}\), contradicting Lemma~\ref{lem:missing-clique}.  Therefore the
assumption \(|C^\ast|\le21\) is impossible; any radius-two cover in
\((\Fin 8)^4\) must have at least \(22\) codewords.

The formal lower-bound theorem has target
\[
  \texttt{QaryKLower 8 4 2 22}.
\]
\LeanRefs{\leanmaindecl{octonaryFourRadiusTwoLowerTwentyTwo}{5455};
\leanmaindecl{exists_four_partite_clique_of_deficit_budget_five}{656}.}
\end{proof}

\subsection{The 22-word profile classifier}\label{subsec:profile-classifier}

\begin{proposition}[22-word profile classifier]\label{prop:profile-classifier}
If \(|C^\ast|=22\), then, after relabeling the alphabet independently in each
coordinate, there are partitions
\[
  A_i^{(1)}\sqcup A_i^{(2)}\sqcup A_i^{(3)}=\Fin 8
  \qquad (i=0,1,2,3)
\]
with \(|A_i^{(1)}|=|A_i^{(2)}|=3\) and \(|A_i^{(3)}|=2\), such that for every
coordinate pair \(i<j\),
\[
  E_{ij}=\bigcup_{r\ne s} A_i^{(r)}\times A_j^{(s)}.
\]
\end{proposition}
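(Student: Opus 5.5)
The plan is to redo the clique-building argument of Proposition~\ref{prop:no-five-excess}, now with total excess exactly six in every coordinate, and to record the precise configurations in which the greedy construction of a clique $(a,b,c,d)$ fails. By Lemma~\ref{lem:missing-clique}, every such construction must fail, and each failure pins down part of the structure. First, I would fix a coordinate $i$ and an excess vector $\varepsilon_i$ with $\sum_a\varepsilon_i(a)=6$, which leaves at least two zero-excess symbols. I would then redo the intersection estimates: for a zero-excess $a$ in coordinate $0$ and a zero-excess $b\in N_{01}(a)$, the argument before went through with margin one, and with budget six the margin becomes zero. So the failure of the final triple intersection must be tight. That means $N_{03}(a)\cap N_{13}(b)$ has size exactly $4$, $N_{23}(c)$ has size exactly $4$ or $5$, and these sets are complementary. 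Equality in $|A\cap B|\ge|A|+|B|-8$ forces $|A\cup B|=8$, and equality in the row bound~\eqref{eq:missing-neighbor-row-bound} forces the fiber $C^\ast_{i,u}$ to realize pairwise distinct columns with no repeats. These equality cases are what turn counting into structure.

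Second, from these tight equalities I would extract the \emph{missing-row pattern}. For symbols $u,u'$ in coordinate $i$ that are not joined by a clique extension, the rows $N_{ij}(u)$ and $N_{ij}(u')$ should either coincide or be disjoint on the complement. This defines an equivalence relation on $\Fin 8$ in each coordinate, whose classes are the candidate blocks $A_i^{(r)}$. The edge-set identity $E_{ij}=\bigcup_{r\ne s}A_i^{(r)}\times A_j^{(s)}$ says exactly that present pairs are the block-diagonal ones. Row counts, each row having $8-|A_j^{(r)}|$ missing entries with $8-|A_j^{(r)}|\ge 6-\varepsilon_i(u)$, then force block sizes with $\sum_r |A^{(r)}|^2$ compatible with the budget. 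In particular, a $3+3+2$ partition gives $\varepsilon$-values $1,1,1,1,1,1,0,0$, summing to $6$, which matches. I would show by a small case analysis on the multiset of excess values (the partitions of $6$ into at most eight parts) that every other pattern admits a clique.

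Third, I expect the genuinely hard step to be the global consistency across all six graphs simultaneously. The equivalence classes must be the same in coordinate $i$ whether they are read from $E_{ij}$ or from $E_{ik}$, and a hand case analysis over the excess profiles (combined with which rows are tight) explodes quickly. Following the paper's stated approach, I would encode this as finite CNF instances. The variables are the membership bits of the $E_{ij}$ and the excess values. The clauses are the row and column lower bounds, the total-excess equalities, the no-clique constraints of Lemma~\ref{lem:missing-clique}, and symmetry-breaking clauses that sort symbols by excess in each coordinate; the symmetry breaking is justified by the permitted per-coordinate relabeling. I would first fix $E_{01}$ to its block form, since that is the only graph a single pair of coordinates can determine by the counting above. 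A first LRAT refutation would show that any non-block $E_{01}$ is impossible. A second would show that, with $E_{01}$ in block form, each remaining $E_{ij}$ must share the same blocks. Gluing the two refutations to the counting lemmas, and checking that the encoding is faithful, is where I expect the main obstacle to lie.
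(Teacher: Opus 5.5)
Your third step has the same architecture as the paper: a degree-constrained no-clique profile, one LRAT refutation forcing the $3+3+2$ cross-block form of $E_{01}$, and a second making the blocks common to all six graphs. The hand reduction meant to feed it is wrong, however. With total excess six, the clique construction does not become tight at the final triple intersection. Take a zero-excess $a$ in coordinate $0$ and a zero-excess $b\in N_{01}(a)$. Then $N_{02}(a)\cap N_{12}(b)$ still has at least four points, and they cannot all have $\varepsilon_2\ge 2$ because $4\cdot 2=8>6$. So some $c$ there has $\varepsilon_2(c)\le 1$ and $|N_{23}(c)|\ge 5$, and $|N_{03}(a)\cap N_{13}(b)\cap N_{23}(c)|\ge 4+5-8=1$, exactly as in the 21-word case. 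The equality configurations you plan to mine therefore never occur; this includes $|N_{23}(c)|=4$, and sets of sizes $4$ and $5$ cannot be complementary in $\Fin 8$. The ``equivalence relation'' of your second step then has nothing to rest on. The only step that actually breaks at budget six is the choice of $b$: coordinate $1$ may have just two zero-excess symbols, and $N_{01}(a)$ may avoid both.

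The missing idea, which the paper proves by hand before any certificate, is to reverse the order of the first two choices. A zero-excess $b$ in coordinate $1$ always exists, and its column in $E_{01}$ has at least six missing entries. If coordinate $0$ had at least three zero-excess symbols, one of them would be an $a$ with $(a,b)\in E_{01}$, and the argument above completes a clique. Since coordinates can be renamed, every coordinate therefore has at least six positive excess values summing to six. Hence each coordinate has exactly six symbols of excess one and two of excess zero.

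This lemma is what lets the paper's two CNF instances use fixed degree bounds: at least $5$ for symbols $0,\dots,5$ and at least $6$ for $6,7$. They need no excess variables and no case split over partitions of $6$. Your fallback of encoding the excess values in the CNF could in principle replace the lemma, since the enlarged instance is still unsatisfiable. But then your first two steps contribute nothing. Moreover, the ``small case analysis'' you assert without proof, that every unbalanced excess pattern admits a clique, is precisely this reversed-order argument.
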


\begin{center}
  \captionsetup{type=figure,hypcap=false,font=footnotesize,skip=2pt}
  \MissingPairBlockMatrix
  \caption{Block form of \(E_{ij}\) for \(i<j\): red off-diagonal cells are
  the missing pairs in \(E_{ij}\), while gray diagonal cells are present pairs.}
  \label{fig:missing-pair-block-form}
\end{center}

\begin{proof}[Proof sketch]
Let \(C^\ast\) be a 22-word cover.  For each coordinate \(i\), the eight
fibers \(C^\ast_{i,a}\), \(a\in\Fin 8\), partition \(C^\ast\), so
\[
  |C^\ast|
  =
  16+\sum_{a\in\Fin 8}\varepsilon_i(a).
\]
Here \(16=8\cdot2\): for the fixed coordinate \(i\), the eight fibers
\(C^\ast_{i,a}\), \(a\in\Fin 8\), partition \(C^\ast\), and each contributes
the mandatory baseline of two codewords.
Hence
\[
  \sum_{a\in\Fin 8}\varepsilon_i(a)=6.
\]
Recall that \(\varepsilon_i(a)=|C^\ast_{i,a}|-2\).  Thus
\(\varepsilon_i(a)=0\) means that the fiber \(C^\ast_{i,a}\) has exactly the
mandatory two codewords, while \(\varepsilon_i(a)>0\) means that this fiber
contains additional codewords above that baseline.
Suppose, for contradiction, that there is a coordinate \(i_0\in\Fin 4\) for
which at most five values \(\varepsilon_{i_0}(a)\), \(a\in\Fin 8\), are
positive.  By permuting the coordinate names, assume \(i_0=0\).
Independently, there exists \(b\in\Fin 8\) with \(\varepsilon_1(b)=0\):
the eight nonnegative integers \(\varepsilon_1(x)\), \(x\in\Fin 8\), have
sum six, so at least one of them is zero.  Choose such a \(b\).  Then
\(C^\ast_{1,b}\) contains exactly two codewords.
The only present pairs of the form \((x,b)\) in coordinates \(0,1\) must come
from these two codewords; each such codeword realizes one value of \(x\).
Hence at least six rows \(a\in\Fin 8\) satisfy \((a,b)\in E_{01}\).  Since at
most five coordinate-\(0\) symbols have positive excess, there is an
\(a\in\Fin 8\) with
\[
  (a,b)\in E_{01}
  \qquad\text{and}\qquad
  \varepsilon_0(a)=0.
\]
Equivalently, \(b\in N_{01}(a)\).
We now continue with the same intersection argument as in
Proposition~\ref{prop:no-five-excess}, but with total excess six instead of
five.  Since \(\varepsilon_0(a)=0\) and \(\varepsilon_1(b)=0\), the row
bound~\eqref{eq:missing-neighbor-row-bound} gives
\[
  |N_{02}(a)|,\ |N_{03}(a)|,\ |N_{12}(b)|,\ |N_{13}(b)| \ge 6 .
\]
Thus \(N_{02}(a)\) and \(N_{12}(b)\), as subsets of \(\Fin 8\), intersect in
at least \(6+6-8=4\) points.  Since
\(\sum_x\varepsilon_2(x)=6\), not all four of those points can have
\(\varepsilon_2\)-value at least \(2\); otherwise the sum would be at least
\(8\).  Hence one can choose
\[
  c\in N_{02}(a)\cap N_{12}(b)
  \qquad\text{with}\qquad
  \varepsilon_2(c)\le1.
\]
Then the row bound gives \(|N_{23}(c)|\ge6-\varepsilon_2(c)\ge5\).
Also \(N_{03}(a)\) and \(N_{13}(b)\), as subsets of \(\Fin 8\), intersect in
at least \(6+6-8=4\) points.  Applying the same subset-counting estimate to
this intersection and \(N_{23}(c)\) gives
\[
  |N_{03}(a)\cap N_{13}(b)\cap N_{23}(c)|\ge 4+5-8=1.
\]
Thus one can choose
\[
  d\in N_{03}(a)\cap N_{13}(b)\cap N_{23}(c).
\]
Thus all six coordinate pairs among \(a,b,c,d\) are missing, contradicting
Lemma~\ref{lem:missing-clique}.
Thus exactly six
\(\varepsilon_i(a)\) are positive.  Since these six positive integers have sum
six, they are all equal to one, and the remaining two are zero.

Relabel the alphabet independently in each coordinate so that the six
symbols with \(\varepsilon_i(a)=1\) are \(0,\ldots,5\), and the two symbols
with \(\varepsilon_i(a)=0\) are \(6,7\).  After this relabeling, every missing
graph \(E_{ij}\) satisfies the following two-sided degree bounds: rows and
columns indexed by \(0,\ldots,5\) have at least five entries belonging to
\(E_{ij}\), and rows and columns indexed by \(6,7\) have at least six entries
belonging to \(E_{ij}\).  The
no-clique condition of Lemma~\ref{lem:missing-clique} is invariant under the
same relabeling.

The remaining step is a finite classification.  It is not a hand enumeration
printed in this paper; the complete check is supplied by the two
\Lean-checked LRAT refutations described below.  The finite problem has one
Boolean variable for each ordered pair in each of the six graphs \(E_{ij}\),
hence \(6\cdot8\cdot8=384\) graph variables.  The row and column lower bounds
are encoded as cardinality clauses.  The no-clique condition is encoded by one
clause for each \(a,b,c,d\in\Fin 8\): at least one of the six candidate missing
pairs must be absent.

The proof-relevant finite classification is split into two certified steps.  In
both steps, a \emph{canonical balanced lower profile} means a six-graph profile
after the relabeling above: symbols \(0,\ldots,5\) are the six positive-excess
symbols, symbols \(6,7\) are the two zero-excess symbols, the row and column
lower bounds are those just derived, and the no-clique condition of
Lemma~\ref{lem:missing-clique} holds.  The lower profile does not assume exact
row or column degrees.  In the CNF files, the \emph{lower profile clauses}
are the clauses encoding these row and column lower bounds for the Boolean
variables representing membership \((a,b)\in E_{ij}\).

\paragraph{Finite certificate A: \(E_{01}\) is forced into \(3+3+2\)
cross-block form.}
There is no canonical balanced lower profile in which \(E_{01}\) fails to be a
\(3+3+2\) cross-block graph with low block \(\{6,7\}\) on both sides.
Equivalently, every such profile has two three-symbol high blocks and the
two-symbol low block \(\{6,7\}\) in coordinate \(0\), and similarly in
coordinate \(1\), such that
\[
  E_{01}=\bigcup_{r\ne s} B_0^{(r)}\times B_1^{(s)}.
\]
The CNF file
\leanfilelabel{first lower classifier CNF}
  {data/K_8_4_2/lrat/profile_block_lower_e01_not_block.cnf}
encodes these lower profile clauses, the no-clique clauses, and the negation
of this conclusion by requiring \(E_{01}\) to differ from every candidate
\(3+3+2\) block form.  Its LRAT certificate derives the empty clause, so this
negation is impossible.

\begin{center}
  \captionsetup{type=figure,hypcap=false,font=footnotesize,skip=2pt}
  \CertificateAForcedEdgeDiagram
  \caption{Certificate A fixes the \(3+3+2\) cross-block form of \(E_{01}\);
  the other five graphs still only satisfy the lower-profile constraints.}
  \label{fig:certificate-a}
\end{center}

After certificate A, relabel the high symbols in coordinates \(0\) and \(1\) so
that the \(E_{01}\) block form is the fixed canonical one.  This relabeling
preserves the lower profile constraints and the no-clique condition.

\paragraph{Finite certificate B: the block form is global.}
With \(E_{01}\) fixed to this canonical \(3+3+2\) block graph, there is no
canonical balanced lower profile without a common \(3+3+2\) block system for
all six graphs.  Equivalently, every such profile admits block decompositions
for coordinates \(0,1,2,3\) for which each \(E_{ij}\) is exactly the
corresponding cross-block product.
The CNF file
\leanfilelabel{second lower classifier CNF}
  {data/K_8_4_2/lrat/profile_block_lower_not_global.cnf}
keeps the lower profile and no-clique clauses, adds unit clauses forcing the
canonical \(E_{01}\), and encodes the negation of every candidate common block
system by requiring that each candidate fail in at least one graph edge.  Its
LRAT certificate also derives the empty clause.  Hence a common block system
exists.  Undoing the temporary canonical relabeling gives the partitions
\(A_i^{(1)},A_i^{(2)},A_i^{(3)}\) in the statement, and the missing edges are
precisely the cross-block products
\[
  E_{ij}=\bigcup_{r\ne s} A_i^{(r)}\times A_j^{(s)}.
\]

\begin{center}
  \captionsetup{type=figure,hypcap=false,font=footnotesize,skip=2pt}
  \CommonBlockSystemDiagram
  \caption{Certificate B promotes the forced \(E_{01}\) block form to a common
  \(3+3+2\) block system for all six graphs \(E_{ij}\).}
  \label{fig:common-block-system}
\end{center}

The two lower CNF/LRAT pairs above are the proof-relevant classifier data for
the theorem \(K_8(4,2)\ge23\).  Exact-degree variants are stored in the artifact
as additional checks, but the lower-bound proof uses only the lower profile
constraints stated here.

Thus the printed argument reduces the classifier to two finite unsatisfiability
claims; the actual proof of those claims is the LRAT proof replay checked by
\Lean{}.

\LeanRefs{\leanmaindecl{octonaryFourRadiusTwo_profile_graph_of_card_eq_twenty_two}{4856};
\leanmaindecl{balanced_common_block_system_direct_of_lower_lrat}{4159};
\leanmaindecl{OctonaryFourTwoProfileGraph.E01HasCanonicalLowThreeThreeTwoBlockForm_of_lower_lrat}{2883};
\leanmaindecl{OctonaryFourTwoProfileGraph.HasCommonThreeThreeTwoCrossBlocks_of_lower_lrat}{3227};
\leanlratdecl{profileBlockLowerE01NotBlockProof}{293};
\leanlratdecl{profileBlockLowerNotGlobalProof}{297};
\leanmaindecl{weakRowsCanonicalRelabel4}{5208};
\leanmaindecl{octonaryFourRadiusTwo_profile_graph_relabel_canonicalBalanced}{5252};
\leanmaindecl{no_cover22_of_lower_lrat_classifier}{5295}.}
\end{proof}

\subsection{The common block contradiction}\label{subsec:common-block-contradiction}

\begin{proposition}[Block contradiction]\label{prop:block-contradiction}
No 22-word cover can have the common \(3+3+2\) block system of
Proposition~\ref{prop:profile-classifier}.
\end{proposition}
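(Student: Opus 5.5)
The plan is to show that the block system in Proposition~\ref{prop:profile-classifier} splits \(C^\ast\) into three independent sub-codes. Each sub-code must contain every pair inside its own blocks. Counting these forced pairs then exceeds the budget of \(22\) words.

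\textbf{Step 1: every codeword lives in a single block component.} Let \(e\in C^\ast\) and \(i<j\). The pair \((e_i,e_j)\) is realized by \(e\), so it is not in \(E_{ij}\). Since \(E_{ij}=\bigcup_{r\ne s}A_i^{(r)}\times A_j^{(s)}\), the symbols \(e_i\) and \(e_j\) lie in blocks with the same index. Applying this to all six pairs shows there is a unique \(r(e)\in\{1,2,3\}\) with \(e_i\in A_i^{(r(e))}\) for every \(i\in\Fin 4\). Hence
\[
  C^\ast=C^{(1)}\sqcup C^{(2)}\sqcup C^{(3)},\qquad
  C^{(r)}=\{e\in C^\ast: r(e)=r\}.
\]

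\textbf{Step 2: each component is a strength-two array on its blocks.} Take \(i<j\) and \((a,b)\in A_i^{(r)}\times A_j^{(r)}\). This is a diagonal-block pair, so \((a,b)\notin E_{ij}\). By the definition of \(E_{ij}\), some codeword \(e\) has \(e_i=a\) and \(e_j=b\). By Step~1 this codeword lies in \(C^{(r)}\). So, for every coordinate pair \(i<j\), the projection of \(C^{(r)}\) onto coordinates \(i,j\) contains all of \(A_i^{(r)}\times A_j^{(r)}\). Consequently
\[
  |C^{(1)}|\ge 3\cdot3=9,\qquad |C^{(2)}|\ge 9.
\]
Together with \(|C^\ast|\le22\), this gives \(|C^{(3)}|\le 4\). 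On the other hand, \(C^{(3)}\) must realize all \(2\cdot2=4\) pairs of \(A_i^{(3)}\times A_j^{(3)}\) for each of the six coordinate pairs. Identify each two-element block \(A_i^{(3)}\) with \(\{0,1\}\). Then \(C^{(3)}\) becomes a binary array with four columns and at most four rows in which every pair of columns shows all four patterns. This forces exactly four rows, so each pattern appears exactly once in every pair of columns.

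\textbf{Step 3: no such binary array exists.} This is the step I expect to carry the real content; it is the classical bound that an \(\mathrm{OA}(4,k,2,2)\) has \(k\le3\). Map the symbol \(0\mapsto+1\) and \(1\mapsto-1\), and view each column as a vector \(v_i\in\mathbb{R}^4\).
\begin{itemize}
\item Every pattern occurs exactly once in each pair of columns, so each column is balanced. Hence \(v_i\perp\mathbf 1\), where \(\mathbf 1\) is the all-ones vector.
\item In each pair of columns, the agreeing patterns \((0,0),(1,1)\) and the disagreeing patterns \((0,1),(1,0)\) each occur twice. Hence \(v_i\cdot v_j=2-2=0\) for \(i\ne j\).
\end{itemize}
Thus \(\mathbf 1,v_0,v_1,v_2,v_3\) are five pairwise orthogonal nonzero vectors in \(\mathbb{R}^4\), which is impossible. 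A purely combinatorial alternative, better suited to Lean, is a short exhaustive check: fix the first two columns to be \(00,01,10,11\) up to row order. The only balanced columns orthogonal to both are the XOR column and its complement, and these two are not orthogonal to each other.

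This contradiction shows no \(22\)-word cover admits the common \(3+3+2\) block system. Note that the argument uses only the exact cross-block form of the six graphs and \(|C^\ast|\le22\); the covering property itself was already used in producing the block system.
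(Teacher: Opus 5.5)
Your proof is correct and follows the paper's argument step for step. It uses the same decomposition of \(C^\ast\) into block components via the missing cross-block pairs, the same \(9+9\) count forcing \(|C^{(3)}|\le4\), and the same reduction to a binary strength-two array with four columns and exactly four rows. The only divergence is your main proof of that last step: the \(\pm1\) orthogonality argument producing five pairwise orthogonal nonzero vectors in \(\mathbb{R}^4\), which is a short dimension count in the spirit of the Rao bound. The paper instead uses exactly the elementary case analysis you offer as your Lean-friendly alternative: fix the first two columns as \(00,01,10,11\), force the third and fourth columns into \(\{0110,1001\}\), and note that they are then equal or complementary.
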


\begin{proof}
Assume that a 22-word cover \(C^\ast\) has such a common block system.  For
\(r=1,2,3\), define the \(r\)-th block component
\[
  C^{(r)}=\{\,c\in C^\ast : c_i\in A_i^{(r)}
        \text{ for all } i\in\Fin 4\,\}.
\]
Every codeword of \(C^\ast\) belongs to exactly one of these three components.
Indeed, if two coordinates \(i<j\) of a codeword had block indices \(r\ne s\),
then the pair \((c_i,c_j)\) would be a cross-block pair.  By
Proposition~\ref{prop:profile-classifier}, this pair lies in \(E_{ij}\).  But
\(E_{ij}\) consists of pairs which occur in no codeword of \(C^\ast\), a
contradiction.  Hence all four coordinates of \(c\) have the same block index.

Next fix a component \(C^{(r)}\).  Since the only missing pairs are cross-block
pairs, every same-block pair occurs.  More explicitly, if
\((u,v)\in A_i^{(r)}\times A_j^{(r)}\), then this pair is not in \(E_{ij}\), so
some codeword \(e\in C^\ast\) has \(e_i=u\) and \(e_j=v\).  By the previous
paragraph, all coordinates of \(e\) have one common block index; because two of
them lie in the \(r\)-th blocks, this index is \(r\), and hence
\(e\in C^{(r)}\).  Thus for every \(i<j\), the projection of \(C^{(r)}\) to
coordinates \(i,j\) is the full product
\[
  A_i^{(r)}\times A_j^{(r)}.
\]
For \(r=1,2\) these two factors have size three.  Taking only the coordinate
pair \(0,1\), the projection \(C^{(r)}\to A_0^{(r)}\times A_1^{(r)}\) is therefore
surjective onto a set of size \(3\cdot3=9\).  Since one codeword contributes only
one projected pair, \(|C^{(1)}|\ge9\) and \(|C^{(2)}|\ge9\).  The three components are
disjoint subsets of the 22-word cover, so
\[
  |C^{(3)}|\le 22-|C^{(1)}|-|C^{(2)}|\le 4.
\]

It remains to derive a contradiction from the component \(C^{(3)}\) on the
two-symbol blocks.  Choose bijections from \(\Fin 2\) onto the two-symbol sets
\(A_i^{(3)}\), one for each coordinate, and pull \(C^{(3)}\) back to a binary
code \(D\subseteq(\Fin 2)^4\).  Then \(|D|\le4\).  Moreover, the same projection
argument shows that for every coordinate pair \(i<j\), all four binary pairs
occur in the projection of \(D\) to \(i,j\).  In other words, \(D\) would be a
binary strength-two array of length four with at most four rows, where
``strength two'' means that every projection to two coordinate columns contains
all four binary pairs.

No such binary array exists.  Since the first two columns must realize all four
binary pairs, \(|D|\ge4\), hence \(|D|=4\).  After reordering rows and possibly
complementing the first two columns, the first two columns may be written as
\[
  00,\quad 01,\quad 10,\quad 11.
\]
Write a third column as \(y=(y_1,y_2,y_3,y_4)\).  To realize both pairs with the
first column, one must have \(y_1\ne y_2\) and \(y_3\ne y_4\).  To realize both
pairs with the second column, one must have \(y_1\ne y_3\) and
\(y_2\ne y_4\).  These four inequalities force
\[
  y=(0,1,1,0)
  \qquad\text{or}\qquad
  y=(1,0,0,1).
\]
Thus any third or fourth column that realizes all four pairs with each of the
first two columns must be one of these two complementary columns.  The third and
fourth columns are therefore either equal or complementary.  In the first case
their joint projection contains only \(00\) and \(11\); in the second case it
contains only \(01\) and \(10\).  In neither case are all four binary pairs
present, a contradiction.

\LeanRefs{\leanmaindecl{no_code_with_common_three_three_two_missing_cross_block_system}{4422};
\leanmaindecl{no_cover22_of_extracted_common_cross_blocks}{4984};
\leanmaindecl{no_binary_four_column_strength_two_at_most_four_rows}{882};
\leanmaindecl{no_binary_four_column_strength_two_four_rows}{859}.}
\end{proof}

\begin{theorem}[Lower bound]\label{thm:k842-lower}
\[
  K_8(4,2)\ge 23.
\]
\end{theorem}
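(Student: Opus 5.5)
The plan is to assemble the theorem from the results already established, arguing by contradiction. Suppose \(K_8(4,2)\le 22\). Then there is a radius-two cover \(C^\ast\subseteq(\Fin 8)^4\) with \(|C^\ast|\le22\), and this is exactly the hypothetical counterexample fixed at the start of Section~\ref{sec:lower-bound}. So all the lemmas and propositions proved there for \(C^\ast\) are available. In particular, Lemma~\ref{lem:fiber-lower} gives the nonnegative excess functions \(\varepsilon_i\) of Definition~\ref{def:fiber-excess}, and Lemma~\ref{lem:missing-clique} gives the no-clique property of the graphs \(E_{ij}\).

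I would then split on the size of the cover. The first case is \(|C^\ast|\le21\). This is excluded directly by Proposition~\ref{prop:no-five-excess}, which builds a forbidden four-partite clique from the excess budget of at most five.

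The second case is \(|C^\ast|=22\). Here Proposition~\ref{prop:profile-classifier} applies. After relabeling the alphabet independently in each coordinate, it yields a common \(3+3+2\) block system with \(E_{ij}=\bigcup_{r\ne s}A_i^{(r)}\times A_j^{(s)}\) for all \(i<j\). Proposition~\ref{prop:block-contradiction} says no 22-word cover has such a system, and this is a contradiction.

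Two points need checking before the case split is fully justified.

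\begin{itemize}
\item The relabeling in Proposition~\ref{prop:profile-classifier} must transport back correctly. A coordinatewise alphabet permutation is a Hamming isometry. It therefore maps radius-two covers to radius-two covers of the same size. It also carries fibers and missing-pair graphs to the corresponding relabeled objects. Hence the relabeled code is again a 22-word cover that has the block system, and Proposition~\ref{prop:block-contradiction} applies to it.
\item The two cases must be exhaustive. This holds because \(|C^\ast|\) is a natural number at most \(22\).
\end{itemize}

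In the formal setting, one more detail matters: the lower-bound certificate quantifies over covers of cardinality at most \(22\). The case split is therefore on whether \(|C^\ast|\le 21\) or \(|C^\ast|=22\). No separate monotonicity argument is needed, since Proposition~\ref{prop:no-five-excess} already covers every size up to \(21\).

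I expect no step of this assembly to be a real obstacle; the substantive work is in the cited propositions. The only delicate point is the bookkeeping in the \(22\)-word branch. The classifier produces its block system only up to relabeling, while Proposition~\ref{prop:block-contradiction} is stated for the cover itself. I would handle this by proving once that relabeling is an isometry preserving covering, cardinality, and the missing-pair profile, and then invoking the block contradiction on the relabeled cover.
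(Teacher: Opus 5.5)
Your proposal is correct and matches the paper's proof. You exclude \(|C^\ast|\le 21\) by Proposition~\ref{prop:no-five-excess}, and for \(|C^\ast|=22\) you combine the profile classifier with the block contradiction. Your extra remark that coordinatewise relabeling is a Hamming isometry preserving cardinality, covering, and the missing-pair profile is sound bookkeeping; the paper handles it implicitly by undoing the temporary relabeling inside the proof of Proposition~\ref{prop:profile-classifier}.
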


\begin{proof}[Proof of Theorem~\ref{thm:k842-lower}]
Let \(C^\ast\) be a hypothetical radius-two cover with \(|C^\ast|\le22\).  By
Proposition~\ref{prop:no-five-excess}, \(|C^\ast|\) cannot be at most 21.
Hence \(|C^\ast|=22\).  Proposition~\ref{prop:profile-classifier} then gives a
common \(3+3+2\) block system for the missing-pair profile of \(C^\ast\), but
Proposition~\ref{prop:block-contradiction} shows that no 22-word cover can have
such a profile.  This contradiction proves \(K_8(4,2)\ge23\).

\LeanRefs{\leanmaindecl{octonaryFourRadiusTwoLowerTwentyThree}{5497}.}
\end{proof}

\section{Formalization Contribution}

\begin{table*}[t]
\centering
\scriptsize
\setlength{\tabcolsep}{3pt}
\begin{tabular}{@{}p{0.22\textwidth}p{0.30\textwidth}p{0.20\textwidth}p{0.24\textwidth}@{}}
\toprule
Paper claim & Lean declaration or data & Artifact link & Trust boundary \\
\midrule
Explicit 23-word upper bound &
\leanmain{explicit code}{36},
\leanmain{upper validity}{83} &
\leanmain{main source}{1} &
Finite covering check in \Lean{}. \\

Helper lower bound \(K_7(3,1)\ge22\) &
\leansparse{\(K_7(3,1)\ge22\)}{400} &
\leansparse{sparse-slicer source}{1} &
Handwritten \Lean{} plus finite leaves. \\

Lower bound excluding \(\le21\) words &
\leanmain{21-word exclusion}{5455} &
\leanmain{main source}{1} &
Structural graph argument in \Lean{}. \\

LRAT profile classifiers &
\leanlrat{\(E_{01}\) LRAT proof}{293},
\leanlrat{global LRAT proof}{297} &
\leanlrat{LRAT source}{1},
\leandirlabel{LRAT data}{data/K_8_4_2/lrat/} &
CNF/LRAT parsed and checked in \Lean{}. \\

Exact lower bound \(K_8(4,2)\ge23\) &
\leanmain{exact lower theorem}{5497} &
\leanmain{main source}{1} &
Structural proof plus LRAT classifiers. \\

Exact value \(K_8(4,2)=23\) &
\leancovernum{\texttt{QaryKSpec}}{40},
\leancovernum{\texttt{KSpec.ofUpperLower}}{89} &
\leancovernum{covering-number source}{1} &
Exactness obtained by recombining matching upper and lower certificates. \\
\bottomrule
\end{tabular}
\caption{Compact audit map for the \(K_8(4,2)=23\) certificate.}
\label{tab:k842-audit}
\end{table*}

\begin{table*}[t]
\centering
\footnotesize
\begin{tabular}{@{}llrrp{0.28\textwidth}@{}}
\toprule
Proof-path item & Mode & Wall time & Max RSS & Comment \\
\midrule
\leansource{\texttt{SparseSlicer.lean}}{CoveringCodes/Database/Sources/SparseSlicer.lean}{1} &
kernel & 5.58 s & 3.65 GiB &
Helper lower bound \(K_7(3,1)\ge22\). \\
\leansource{\texttt{OctonaryFourTwoBlockLRAT.lean}}{CoveringCodes/Database/Sources/OctonaryFourTwoBlockLRAT.lean}{1} &
kernel & 17.32 min & 13.44 GiB &
CNF/LRAT parsing and reflective LRAT proof construction. \\
\leansource{\texttt{OctonaryFourTwo.lean}}{CoveringCodes/Database/Sources/OctonaryFourTwo.lean}{1} &
kernel & 13.67 min & 143.40 GiB &
Main \(K_8(4,2)\) proof path with finite leaves replayed by kernel reduction. \\
\leansource{\texttt{OctonaryFourTwo.lean}}{CoveringCodes/Database/Sources/OctonaryFourTwo.lean}{1} &
native & 2.92 min & 5.54 GiB &
Same main proof path with switchable finite leaves closed by native evaluation. \\
\bottomrule
\end{tabular}
\caption{Focused resource measurements for the \(K_8(4,2)=23\) proof path.
Times are wall-clock measurements and memory is maximum resident set size for
whole-file checks.}
\label{tab:k842-resources}
\end{table*}

At the theorem-prover boundary, the result uses the same covering-number
certificate interface as the general database.  The \Lean{} library uses the
linked \leancovernum{upper-bound predicate}{54} and
\leancovernum{lower-bound predicate}{68}.  Exact statements are expressed by
the linked \leancovernum{\texttt{QaryKSpec}}{40} predicate.  The present
artifact does not introduce a separate primitive exact certificate; its citable
exactness step is the library theorem
\leancovernum{\texttt{KSpec.ofUpperLower}}{89}, applied to the matching upper
and lower certificates at \(q=8\), \(n=4\), \(r=2\), and \(k=23\).

For this paper, the upper-bound source is registered under the primitive database
label
\begin{center}
  \leanmaindecl{lean_octonary_four_two_explicit_upper}{73}
\end{center}
It is backed by the declaration
\leanmain{upper-bound validity theorem}{83}.
The listed code is
\leanmain{explicit code data}{36}, packaged by
\leanmain{explicit upper wrapper}{75}.  The wrapper is an instance of the linked
\leanexplicit{explicit upper-bound database structure}{7}.  This structure
stores the field \texttt{card\_le}; thus the formal upper certificate proves the
upper bound \(K_8(4,2)\le23\), not a separate cardinality-equality theorem for
the displayed list.
The lower-bound source is registered under
\begin{center}
  \leanmaindecl{lean_octonary_four_two_structural_lower}{5529}
\end{center}
It is backed by
\leanmain{lower-bound validity theorem}{5534}.
At the target parameters these specialize respectively to
\begin{center}
  \leancovernum{\texttt{QaryKUpper}}{54} \(\texttt{8 4 2 23}\),\\
  \leancovernum{\texttt{QaryKLower}}{68} \(\texttt{8 4 2 23}\).
\end{center}
Together with the library lemma
\leancovernum{upper/lower recombination lemma}{89}, these are
the formal certificate of \(K_8(4,2)=23\).

The proof-local \Lean{} references above identify the individual mathematical
ingredients.  This section summarizes the formal architecture that connects
them.  The upper bound is a conventional explicit-code certificate.  The lower
bound is different:
it starts with an arbitrary finite set \(C\) satisfying the covering predicate
and derives the coordinate-fiber inequalities, the above-baseline profiles, and
the six missing-pair graphs inside the theorem prover.

The main bridge introduced for this proof is the linked
\leanmain{profile graph type}{989}.  It records the six missing-pair graphs
together with their degree information and the no-clique constraint.  A
hypothetical 22-word cover is converted into such a profile by the linked
\leanmain{profile-graph extraction theorem}{4856}.
This is the formal interface between the mathematical cover and the finite
profile-classification problem.

The lower-bound proof then separates three kinds of evidence.  First, ordinary
\Lean{} arguments extract the balanced profile constraints from a hypothetical
cover.  Second, reflective LRAT replay proves the finite classifier statements
for those profiles.  Third, ordinary \Lean{} combinatorics turns the resulting
common \(3+3+2\) block system into a contradiction.  The SAT solver is therefore
only an untrusted producer of CNF instances and LRAT data; the checked theorem
depends on the stored certificates and on the \Lean{} proofs that connect them
to covering codes.

The proof also records a reusable pattern for computational lower bounds in the
covering-code database.  A search can suggest a small finite obstruction, but
the artifact must still provide a typed extraction theorem, a certificate
checker or reflective proof object, and a final contradiction stated in the
same \leancovernum{lower-bound predicate}{68} interface used
by the rest of the database.  This makes the exact value \(K_8(4,2)=23\) a
test case for attaching solver-produced evidence to a theorem-prover
certificate without adding the solver to the trusted replay path.

\section{Certificate Boundary and Replay}

The trusted boundary is split into four separate stages.

\smallskip\noindent\textbf{Discovery.}\par
Untrusted generation tooling produces the CNF instances, and an untrusted SAT
proof-producing solver produces the LRAT refutations.  Neither the generator,
the solver run, nor any external SAT checker is part of the trusted proof.

\smallskip\noindent\textbf{Frozen artifact.}\par
The checked artifact consists of the handwritten \Lean{} arguments, the
explicit 23-word code, the stored CNF formulas, and the stored LRAT
refutations.  The two lower-bound profile-classification stages used in the
main proof are represented by
\leanfilelabel{CNF 1}
  {data/K_8_4_2/lrat/profile_block_lower_e01_not_block.cnf}
and
\leanfilelabel{CNF 2}
  {data/K_8_4_2/lrat/profile_block_lower_not_global.cnf}.
Their clause counts are 10880 and 10944, respectively.  Exact-degree variants
are stored as additional audit data; all four CNF/LRAT pairs occupy about
11 MB in the linked
\leandirlabel{LRAT data}{data/K_8_4_2/lrat/}.

\smallskip\noindent\textbf{Replay in \Lean.}\par
During proof replay, \Lean{} parses the stored certificate data and checks the
LRAT derivations internally.  The CNF generator is not trusted: for each stored
lower-bound CNF used in the proof, \Lean{} proves that the parsed CNF is exactly
the corresponding direct \Lean{} formula, and separately proves that any
canonical lower profile violating the claimed block conclusion induces a
valuation satisfying that formula.  The LRAT replay then proves that no such
valuation exists.  The bridge from these finite contradictions to the 22-word
cover obstruction is the linked
\leanmain{22-word classifier contradiction}{5295}.  Table~\ref{tab:k842-audit}
summarizes where the paper claims enter the checked artifact.

\LeanRefs{\leanlratdecl{profileBlockLowerE01NotBlockFormula_eq_direct}{436};
\leanlratdecl{profileBlockLowerNotGlobalFormula_eq_direct}{485};
\leanmaindecl{OctonaryFourTwoProfileGraph.profileBlockValuation_satisfies_lowerE01NotBlockFormulaDirect}{2856};
\leanmaindecl{OctonaryFourTwoProfileGraph.profileBlockValuation_satisfies_lowerNotGlobalBlockFormulaDirect}{3196}.}

\smallskip\noindent\textbf{Performance option.}\par
The finite explicit-code covering checks use the configurable proof mode
described in the formal-foundations
paper~\cite{florath2026formal_foundations_covering_codes}.
The headline proof path is replayable in kernel mode: the kernel-mode row in
Table~\ref{tab:k842-resources} uses ordinary \texttt{decide} and does not rely
on \texttt{native\_decide}.  For development and faster artifact checks, native
mode may close selected expensive finite leaves by \texttt{native\_decide};
this is a performance option and enlarges the trusted base to include the
\Lean{} native compiler and generated code.  The LRAT wrapper itself is not a
native-decide shortcut: its central output is a proof term checked by \Lean{}
from the parsed LRAT data.

Table~\ref{tab:k842-resources} records focused file-level measurements for this
proof path.  The runs follow the same convention as the formal-foundations
paper: Lean was invoked through Lake with one worker and the
reported memory is maximum resident set size for the whole file-level check.
These are isolated file-level proof-path measurements in an already built
checkout, not clean-build timings for the full repository.  The full kernel
replay of the main \(K_8(4,2)\) file is therefore a high-memory check: in this
measurement it required 143.40 GiB of RAM.  Native mode is the practical
lower-memory replay option for the expensive finite leaves.

The main file is the only part of this proof path with a large native-vs-kernel
difference, because it contains expensive finite proof leaves closed by the
switchable proof mode.  The LRAT wrapper is dominated by parsing, proof
construction, and kernel checking of the certificate data, so native mode is
not expected to give the same kind of improvement there.

The measurements were taken on the same server as reported in the
formal-foundations paper: two
Intel Xeon Gold 6438M sockets, 64 physical cores / 128 hardware threads, and
512 GiB RAM.

The focused database query recorded for this result is:
\begin{lstlisting}[language=bash]
$ ./.lake/build/bin/covering_codes 8 4 2
K_8(4,2) ∈ [23, 23]
lower:
  1) lean_octonary_four_two_structural_lower
upper:
  1) lean_octonary_four_two_explicit_upper
(exact: K_8(4,2) = 23)
\end{lstlisting}

\section{Discussion}

The exact value \(K_8(4,2)=23\) is the numerical endpoint of the paper, but the
proof method is the more reusable part.  The lower bound does not reason
directly about all possible 22-word codes.  Instead it passes to complement
data: for each coordinate pair it records which alphabet pairs are missing from
the projection of the code.  In this inverse viewpoint, a word not covered at
radius two appears as a four-partite clique in the missing-pair graphs.  The
covering property therefore becomes a graph-theoretic no-clique condition,
while the fiber bounds become degree constraints on those graphs.

This structural reduction is useful because it separates the proof into a
human-readable mathematical part and a finite classification part.  The
mathematical argument explains why every hypothetical small cover must give one
of a tightly constrained family of missing-pair profiles.  The remaining
classification is not presented as a hand-checkable enumeration.  It is supplied
as CNF instances together with LRAT refutations, and \Lean{} checks that these
certificates derive the required contradictions.

The LRAT workflow changes the role of computation in the proof.  The SAT solver
is used to discover and emit certificate data, but it is not trusted during
proof replay.  The trusted artifact is the \Lean{} development that parses the
stored CNF/LRAT data, checks the refutations, and connects their conclusions
back to the covering-code statement.  Thus a finite subproblem may be too large
for a reader to validate case by case, while still being small enough to become
a theorem-prover-checked proof object.

This suggests a practical pattern for future covering-code lower bounds.  One
can first search for a structural compression of the covering problem, then use
SAT or related tools to solve the resulting finite profile problem, and finally
connect the resulting certificate to the covering-number interface in \Lean{}.
For \(K_8(4,2)\), this pattern closes a single table entry.  Its broader value
is that it shows how structural combinatorics and proof-carrying computational
evidence can be combined without adding the solver itself to the trusted base.

\section{Conclusion}

The exact value \(K_8(4,2)=23\) follows from two machine-checkable certificates:
an explicit radius-two 23-word octonary code, and a \Lean{} lower-bound proof
excluding all 22-word covers.  The lower-bound proof compresses any hypothetical
22-word cover to a six-graph profile problem, uses \Lean-checked LRAT
refutations of stored CNF instances to force a common \(3+3+2\) block system,
and then finishes with a
small combinatorial contradiction inside \Lean{}.

The result also demonstrates the certificate interface of the covering-code
formalization at a larger level of complexity.  Explicit codes, structural lower
bounds, and solver-produced certificates all enter as ordinary proof objects
that can be recombined by the database.  For this parameter, that proof-carrying
workflow turns the historical interval \([22,23]\) into the exact value 23.

\section*{Acknowledgements}
\phantomsection
\addcontentsline{toc}{section}{Acknowledgements}

This work was partly funded by the Federal Ministry of Research, Technology and
Space (BMFTR) in Germany under grant number 16KIS2240 of the SUSTAINET-guardian project.

\section*{Use of AI Tools}
\phantomsection
\addcontentsline{toc}{section}{Use of AI Tools}

This work used AI-based assistants (in particular Codex and Claude
Code) at several stages, and their role is described here explicitly.

After the underlying \(q\)-ary covering-code formalization was complete,
these tools were used to explore which entries of the standard bound
tables might be amenable to a new formal proof on top of that
infrastructure.  This exploration suggested the octonary case
\(K_8(4,2)\) as a candidate.  Unguided search by the tools did not
produce a proof.  The decisive step --- reformulating the covering
condition in terms of the coordinate-pair missing-pair graphs
\(E_{ij}\), so that a hypothetical small cover appears as a forbidden
clique --- was directed by the author.  With that structural framing
fixed, AI tools assisted in working out the supporting lemmas and in
drafting parts of the \Lean{} development and the auxiliary tooling.
AI tools were also used to draft and revise portions of the manuscript
and to assist in several internal review passes.

Formal statements in the artifact are checked by the \Lean{} kernel,
independently of how the corresponding proof terms were produced.  The
author traced the formal argument used for this result, rewrote parts of it
where needed, verified its correspondence with the informal proof presented
here, and reviewed the mathematical statements, cited \Lean{} declarations,
citations, and manuscript text.  The author takes full responsibility for the
content of the paper.

\onecolumn
\phantomsection
\sloppy
\Urlmuskip=0mu plus 1mu\relax
\printbibliography

\end{document}